\newcommand{\defn}{\textbf}
\renewcommand{\d}{\mathsf{d}}
\newcommand{\dd}{\mathbbm{d}}
\newcommand{\cL}{\mathcal{L}}
\newcommand{\var}[3]{\frac{\delta_{#1} {#2}}{\delta {#3}}}
\newcommand{\der}[2]{\frac{\partial {#1}}{\partial {#2}}}
\newcommand{\C}{\mathbb{C}}
\newcommand{\N}{\mathbb{N}}
\newcommand{\R}{\mathbb{R}}
\newcommand{\Z}{\mathbb{Z}}
\newcommand{\p}{\overline} 
\newcommand{\m}{\underline} 
\DeclareMathOperator{\D}{D}
\renewcommand{\t}{\mathcal T}
\newtheorem{thm}{Theorem}
\newtheorem{prop}[thm]{Proposition}
\newtheorem{lemma}[thm]{Lemma}
\newtheorem{cor}[thm]{Corollary}
\theoremstyle{definition}
\newtheorem{definition}[thm]{Definition}
\newtheorem{remark}[thm]{Remark}
\numberwithin{equation}{section}
\title{Semi-discrete Lagrangian 2-forms and the Toda hierarchy}
\author[1]{Duncan Sleigh}
\author[2]{Mats Vermeeren\thanks{Corresponding author. Email: \texttt{m.vermeeren@lboro.ac.uk}}}
\affil[1]{\normalsize School of Mathematics, University of Leeds}
\affil[2]{\normalsize Department of Mathematical Sciences, Loughborough University}
\date{}
\begin{document}
	
	\maketitle
	
	\begin{abstract}
		\noindent
		We present a variational theory of integrable differential-difference equations (semi-discrete integrable systems). This is an extension of the ideas known by the names ``Lagrangian multiforms'' and ``Pluri-Lagrangian systems'', which have previously been established in both the fully discrete and fully continuous cases. The main feature of these ideas is to capture a hierarchy of commuting equations in a single variational principle. Semi-discrete Lagrangian multiforms provide a new way to relate differential-difference equations and PDEs. We discuss this relation in the context of the Toda lattice, which is part of an integrable hierarchy of differential-difference equations, each of which involves a derivative with respect to a continuous variable and a number of lattice shifts. We use the theory of semi-discrete Lagrangian multiforms to derive PDEs in the continuous variables of the Toda hierarchy, which hold as a consequence of the differential-difference equations, but do not involve any lattice shifts. As a second example, we briefly discuss the semi-discrete potential KdV equation, which is related to the Volterra lattice.
		%
		%
	\end{abstract}

	\section{Introduction}
	
	Integrable systems can be described in many different ways, but some of the most important notions of integrability are formulated in the language of Hamiltonian dynamics. These ideas go back at least as far as Liouville, but a similar Lagrangian description of integrability is much more recent. It was first proposed in the context of integrable lattice equations, where all independent variables are discrete (see e.g.\@ \cite{boll2014integrability, lobb2009lagrangian, lobb2009kp}). Later it was developed in the fully continuous case as well, describing families of commuting ODEs or PDEs (see e.g.\@ \cite{suris2013variational,suris2016lagrangian,xenitidis2011Lagrangian-structure}).
	
	In the case of a family of ODEs, each equation is given its own independent variable. The euclidean space spanned by all these variables is called multi-time. The assumption that the ODEs commute means that solutions can be understood as functions of multi-time, rather than functions of a single time variable. In the case of PDEs, the members of an integrable family typically share their space variables, but again they are each given their own time variable. In this case, multi-time is spanned by both the common space-variables and the individual time variables. In the fully discrete case, multi-time is a lattice $\Z^N$ instead of a continuous euclidean space.
	
	The variational formulation of integrability has been presented in two subtly different ways, under the names ``Lagrangian multiforms'' and ``Pluri-Lagrangian systems''. It involves a differential form on multi-time. If we are dealing with ODEs, this is a 1-form. If we are considering a hierarchy of PDEs, it is a $d$-form, where $d$ is the number of independent variables of each individual member of the hierarchy. We can integrate this $d$-form over any orientable $d$-dimensional submanifold of multi-time. The variational principle requires that all such action integrals are critical. We can recover the usual action of one member of the hierarchy by taking the submanifold to be a coordinate (hyper)plane, but many other choices are possible. Hence the complete set of ``multi-time Euler-Lagrange equations'' is larger than the set set of Euler-Lagrange equations of the actions of each individual equation. The additional equations can be thought of as compatibility conditions between the coefficients of the Lagrangian $d$-form, in a similar sense to how vanishing Poisson brackets are a compatibility condition between the Hamiltonians of a Liouville integrable system. Hence a suitably chosen Lagrangian $d$-form can describe an integrable hierarchy in a consistent way.
	
	Connections have been established between the Lagrangian multiform approach and classical topics in integrable systems such as Hamiltonian structures \cite{suris2013variational,vermeeren2021hamiltonian}, variational symmetries \cite{petrera2017variational,petrera2021variational,sleigh2020variational}, and Lax pairs \cite{sleigh2019variational}.
	
	In the present work we extend the theory of Lagrangian multiforms to the semi-discrete case, where some of the independent variables are continuous but others discrete. An application to semi-discrete systems was proposed in one of the early works on Lagrangian multiforms \cite{yoo2011discrete}, but a systematic development of this case has not been carried out before. Our main example will be the hierarchy consisting of the Toda lattice and its symmetries, which together form the Toda hierarchy. The first two equations of this hierarchy are
	\begin{align*}
		&q_{11} = \exp(\p q - q) - \exp( q - \m q), \\
		&q_2 = q_1^2 + \exp(\p q - q) + \exp( q - \m q),
	\end{align*}
where subscripts denote derivatives with respect to the continuous independent variables $t_1$ and $t_2$, and the bar and underline denote lattice shifts in opposite directions. When considering only these two equations, we can take $q$ to be a function of $\Z \times \R^2$, so our multi-time is described by one discrete and two continuous variables. When considering additional members of the hierarchy, with time variables $t_3,\ldots,t_N$, our multi-time will be $\Z \times \R^N$.
	
	A continuous Lagrangian 1-form for the Toda hierarchy was given in \cite{petrera2017variational}. In that description, the elements of the configuration space are vectors describing the positions of all particles. In particular, the discrete direction is not treated as an independent variable, so the multi-time in this case is $\R^N$. In this work we present a semi-discrete Lagrangian 2-form for the Toda hierarchy, in which the lattice position is a discrete independent variable, i.e.\@ multi-time is $\Z \times \R^N$. The semi-discrete components of this 2-form are closely related to the aforementioned 1-form, but the doubly continuous components are a new feature. From these doubly continuous components we will derive PDEs that hold on each single lattice site. These PDEs do not involve any lattice shifts, but hold as a consequence of the lattice equations of the Toda hierarchy.
	
The task at hand is to introduce Lagrangian 2-form theory in the setting of a semi-discrete multi-time $\Z \times \R^N$. This will require us to define the notions of a semi-discrete differential form and of a semi-discrete submanifold of $\Z \times \R^N$. The central principle of semi-discrete Lagrangian multiform theory can then be formulated in terms of the action integrals obtained by integrating a semi-discrete 2-form over an arbitrary semi-discrete surface within $\Z \times \R^N$. Although in the example of the Toda hierarchy the discrete direction has the interpretation of space and the continuous variables can be thought of as times, this interpretation plays no role on the general theory.

	The plan for the paper is as follows. In Section \ref{sec:geometry} we introduce the notions of semi-discrete manifolds and semi-discrete differential forms. In Section \ref{sec:theory} the theory of semi-discrete Lagrangian multiforms is developed.
	In Section \ref{sec:Toda} we derive a semi-discrete Lagrangian 2-form for the Toda lattice and study its implications.  In Section \ref{sec:sdpkdv} we briefly present a second example: the semi-discrete potential KdV hierarchy, which is closely related to the Volterra lattice. We close the paper with a few concluding thoughts and an appendix containing the computations required to generalise the theory to higher semi-discrete forms.
	
	\section{Semi-discrete geometry}
	\label{sec:geometry}
	
	In this section we present the necessary concepts of semi-discrete geometry. For ease of presentation we assume throughout the main text that there is only one discrete dimension. However, all concepts can be extended to a context with several discrete directions, as is discussed in the appendix.
	
	A semi-discrete surface in $\Z \times \R^N$ is a collection of surfaces and curves in $\R^N$, which are each assigned a value of $\Z$.  A possible intuition is that the curves represent the locations where the surface ``jumps'' to a different value of $\Z$. An example is shown in Figure \ref{fig-sd}. This intuition is limited, however, because the semi-discrete surfaces that have the most obvious dynamical meaning consist entirely of lines. If $\Z \times \R^N$ is the space of independent variables $(k, t_1, \ldots, t_N)$ of the Toda hierarchy, then the $n$-th Toda equation can be considered on the subspace with $t_1, \ldots, t_{n-1}, t_{n+1}, \ldots, t_N$ fixed, which is a semi-discrete surface consisting of a line in the $t_n$-direction at each lattice site. Furthermore, this is the semi-discrete surface which we would integrate over in the variational principle for the $n$-th Toda equation by itself. To obtain a variational description of the hierarchy as a whole, i.e.\@ a semi-discrete Lagrangian 2-form, more general semi-discrete surfaces will be needed.
	
	To be precise, we have the following definition:
	
	\begin{figure}[t]
		\centering
		\includegraphics[width=.75\linewidth]{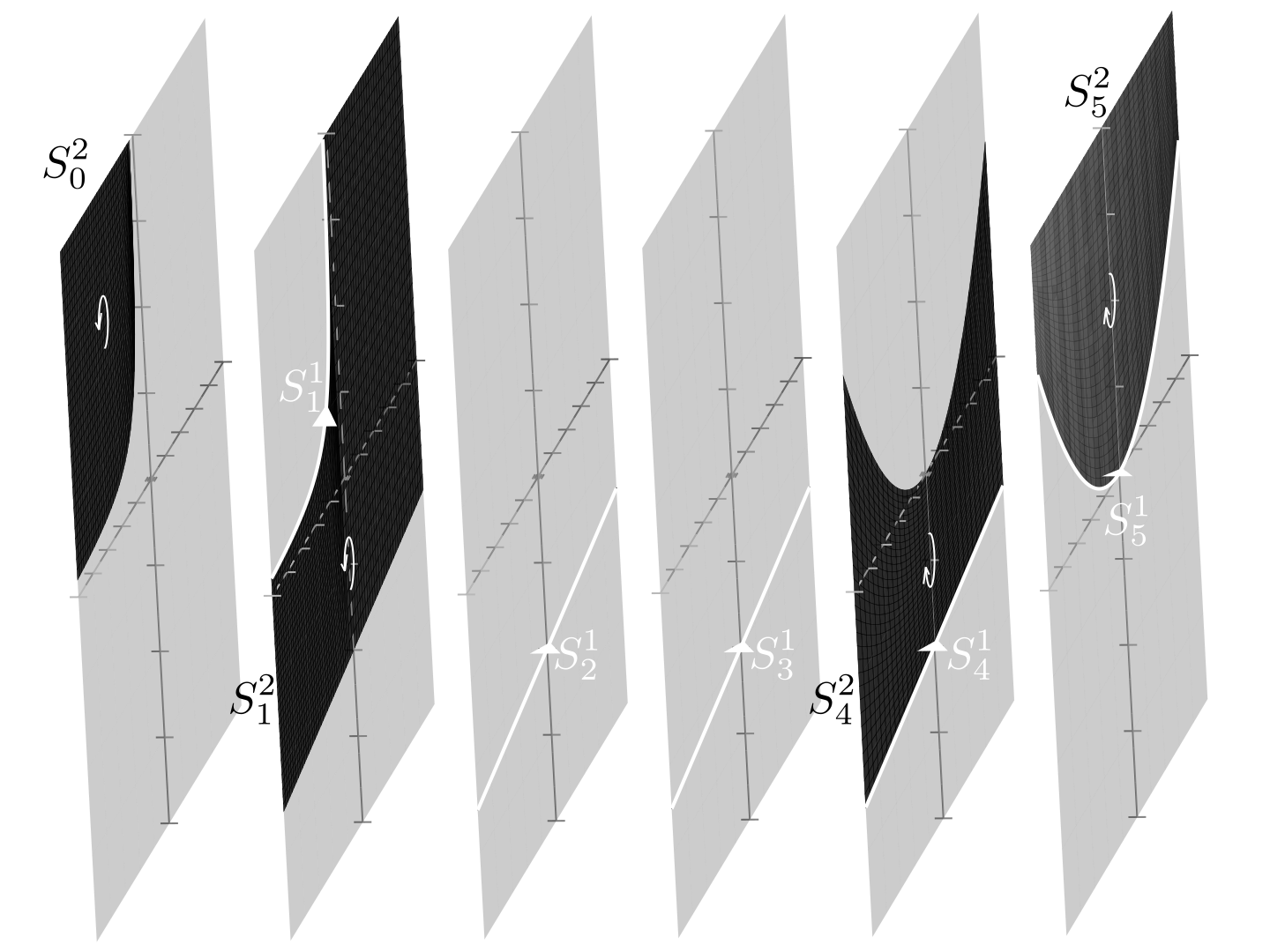}%
		\caption{Visualisation of the space $\Z \times \R^2$ and a semi-discrete surface inside of it. The 2-dimensional elements of the semi-discrete surface are shown in black and the 1-dimensional elements in white. Note that we have oriented the 1-dimensional elements opposite to the boundaries of the 2-dimensional elements, so that these cancel each other when taking the boundary of the semi-discrete surface.}
		\label{fig-sd}
	\end{figure}
	
	\begin{definition}
		\begin{enumerate}[(a)]
			\item
			A $d$-dimensional \defn{semi-discrete submanifold} $S$ of $\Z \times \R^N$ is pair of disjoint unions
			\[
			S = \left( \bigsqcup_{k \in \Z} S_k^{d-1} , \bigsqcup_{k \in \Z} S_k^{d} \right)
			\]
			where $S_k^{d-1}$ is a disjoint union of oriented $(d-1)$-dimensional submanifolds of $\R^N$ and $S_k^d$ is a disjoint union of oriented $d$-dimensional submanifolds of $\R^N$. 
			
			The disjoint unions $S_k^{d-1}$ and $S_k^d$ represent the continuous and semi-discrete elements of the semi-discrete surface at discrete position $k \in \Z$. They are defined as disjoint unions of submanifolds, rather than submanifolds themselves, to allow for overlapping elements and count their multiplicity. 
			
			\item If $d = 2$ we speak of a \defn{semi-discrete surface} and if $d=3$ of a \defn{semi-discrete volume}.
			
			\item The \defn{boundary} $\partial S$ of the $d$-dimensional semi-discrete submanifold $S$ is a $(d-1)$-dimensional semi-discrete submanifold given by
			\[ \partial S = \left( \bigsqcup_{k \in \Z} -\partial S_k^{d-1} , \bigsqcup_{k \in \Z} \left( \partial S_k^{d} \sqcup S_k^{d-1} \sqcup - S_{k+1}^{d-1} \right) \right) , \]
			where the minus sign denotes a change of orientation.
			
			Note that the sign conventions are chosen to ensure that the boundary of a boundary is empty (modulo disjoint unions of two copies of the same manifold with opposite orientations).
		\end{enumerate}    
	\end{definition}

	\begin{remark}
		To make sure a semi-discrete surface looks like a discretisation of a smooth surface, we could require that the $S_k^{d-1}$ and $S_k^d$ do not contain overlapping elements and that the corresponding subset 
		\[ \left( \bigcup_{k} \{k\} \times S_k^d  \right) \cup \left( \bigcup_k [k, k + 1] \times S_k^{d-1} \right) \subset \R \times \R^N \]
		is a topological manifold. However, this restriction is not needed in any of the following. A related, but more useful, restriction would be to consider only semi-discrete submanifolds without boundary. Just like in the classical calculus of variations, we will impose that variations vanish on the boundary of the submanifold on which the action is defined. Considering submanifolds without boundary would remove this condition.
		Examples of semi-discrete submanifolds without boundary include:
		\begin{itemize}
		\item An unbounded continuous manifold at one lattice site: $S_k^{d-1} = \emptyset$, $S_k^{d} = \emptyset$ for $k \neq k_0$, and $S_{k_0}^{d}$ a $d$-dimensional manifold without boundary.
		\item Copies of the same $(d-1)$-dimensional manifold in all lattice locations: $S_k^{d-1}$ independent of $k$ and $S_k^{d} = \emptyset$.
		\item The semi-discrete surface shown in Figure \ref{fig-sd}, assuming it is continued in a suitable way beyond the edges of the image.
		\end{itemize}
		At the other extreme, a simple example of a semi-discrete submanifold in $\Z \times \R^2$ with a very large boundary can be constructed by intersecting the inclined plane $\{(t_0,t_1,t_0) \mid t_0,t_1 \in \R\} \subset \R^3$ with $\Z \times \R^2$. We get:
		\begin{itemize}
		\item $S_k^{d-1} = \{(t_1,k) \mid t_1 \in \R\}$ and $S_k^{d} = \emptyset$. It consists only of lines, and it is contained within its boundary, which is given by
			\[ \partial S = \left( \emptyset , \bigsqcup_{k \in \Z} \left( S_k^{d-1} \cup - S_{k+1}^{d-1} \right) \right) . \]
			This surface will not be of interest in the variational principle, since any variation that vanishes on the boundary vanishes on the whole surface.
		\end{itemize}
	\end{remark}
	
	The semi-discrete space $\Z \times \R^N$ will be our space of independent variables, which we call multi-time.
	We consider semi-discrete fields $q: \Z \times \R^N \mapsto Q$, taking values in some configuration space $Q$. Often we will have $Q = \R$ or $Q = \C$.
	When there is no risk of confusion we will write $q^{[n]}$ or simply $q$ for $q(n,t_1,\ldots,t_N)$.
	To denote partial derivatives of $q$ we will use a multi-index notation. A multi-index is an $N$-tuple $I = (i_1,\ldots,i_N)$ of non-negative integers. We define
	\[ q_I = \der{^{i_1}}{t_1^{i_1}} \ldots \der{^{i_N}}{t_N^{i_N}} q, \]
	so that each entry of $I$ states the number of derivatives to be taken with respect to the corresponding time variable. We will use the notations $I t_j$ and $I \setminus t_j$ to raise or lower an entry of $I$, i.e.\@
	\begin{align*}
	&I t_j = (i_1, \ldots, i_j + 1, \ldots, i_N) , \\
	&I \setminus t_j = (i_1, \ldots, i_j - 1, \ldots, i_N) \qquad \text{if } i_j > 0.
	\end{align*}
	We write $I \not\ni t_j$ if $i_j = 0$.
	By $\t$ we denote the shift operator: $\t q^{[n]} = q^{[n+1]}$, i.e.\@
	\[ \t q(n,t_1,\ldots,t_N) = q(n+1,t_1,\ldots,t_N). \]
	
	We denote by $\mathcal{Q}$ the set of all semi-discrete fields. 
	We are interested in functions of the semi-discrete fields that are autonomous (only depend on $\Z \times \R^N$ through $q \in \mathcal{Q}$) and local in the sense that $f\!\left[q^{[n]}\right]$ depends on $q^{[n+k]} = q(n+k,t_1,\ldots,t_N)$ and its derivatives for a finite number of $k \in \Z$. Since $f$ is assumed to be autonomous, the shift operator acts on it as $\t f[q] = f[\t q]$.
	
	\begin{definition}\label{def-form}
		\begin{enumerate}[(a)]
			\item 
			A $\mathcal Q$-dependent \defn{semi-discrete $d$-form} on $\Z \times \R^N$ is a pair
			\[ \cL[q] = \left(\cL^{d-1}[q] \,,\, \cL^d[q] \right) \]
			consisting of a $(d-1)$-form and a $d$-form, with coefficients that are functions of $\mathcal Q$ in the sense explained above. 
			
			\item
			The \defn{semi-discrete integral} of $\cL$ over a $d$-dimensional semi-discrete submanifold $S$ is given by
			\[ \int_S \cL[q] = \sum_k \int_{S^{d-1}_k} \cL^{d-1}\!\left[q^{[k]}\right] + \sum_k \int_{S^d_k} \cL^d\!\left[q^{[k]}\right], \]
			where the integral over a disjoint union of submanifolds is understood as the sum of the integrals over each of the submanifolds.
			
			\item
			The \defn{exterior derivative} of $\cL[q] = \left(\cL^{d-1}[q] , \cL^d[q] \right)$ is a $\mathcal Q$-dependent semi-discrete $(d+1)$-form defined by
			\begin{equation}\label{exterior-der}
			\dd \cL = \left(\Delta (\cL^d) - \d \cL^{d-1} \,,\, \d \cL^d \right) ,
			\end{equation}
			where $\Delta = \mathsf{id} - \t^{-1}$ is the backward difference operator.
		\end{enumerate}
	\end{definition}
	
	The signs in Equation \eqref{exterior-der} are chosen such that we get the usual alternating expressions in terms of the coefficients of $\cL$. Indeed, if we write 
	\[ \cL = \left( \sum_{i_1<\ldots<i_{d-1}} L_{0 i_1 \ldots i_{d-1}} \,\d t_{i_1} \wedge \ldots \wedge \d t_{i_{d-1}} \,, \sum_{i_1<\ldots<i_{d}} L_{i_1 \ldots i_{d}} \,\d t_{i_1} \wedge \ldots \wedge \d t_{i_{d}} \right) \]
	then we find 
	\begin{align*} \dd \cL 
	&= \left( \sum_{i_1<\ldots<i_{d}} \left( \Delta L_{i_1 \ldots i_{d}} + \sum_{\alpha=1}^d (-1)^{\alpha} \D_{i_\alpha} L_{0 i_1 \ldots \widehat{i_\alpha} \ldots i_d} \right) \d t_{i_1} \wedge \ldots \wedge \d t_{i_{d}} \,, \right. \\
	&\qquad \left. \sum_{i_1<\ldots<i_{d+1}} \sum_{\alpha=1}^{d+1} (-1)^{\alpha-1} \D_{i_{\alpha}} L_{i_1 \ldots \widehat{i_\alpha} \ldots i_{d+1}} \,\d t_{i_1} \wedge \ldots \wedge \d t_{i_{d+1}} \right) ,
	\end{align*}
	where
	\[ \D_i=\frac{\partial}{\partial t_i} +\sum_I q_{It_i}\frac{\partial}{\partial q_I} \]
	is the total derivative with respect to $t_i$.
	
	Our definition of the exterior derivative is justified by the following semi-discrete version of Stokes' theorem.
	
	\begin{thm}
		\label{thm:stokes}
		Let $S$ be a $(d+1)$-dimensional semi-discrete submanifold of $\Z \times \R^N$ and $\cL$ a $\mathcal Q$-dependent semi-discrete $d$-form. There holds
		\[ \int_S \dd \cL = \int_{\partial S} \cL. \]
	\end{thm}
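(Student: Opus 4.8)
The plan is to unwind both sides into ordinary integrals over the continuous pieces and then invoke the classical Stokes' theorem site by site. Write $S = \bigl( \bigsqcup_k S_k^d , \bigsqcup_k S_k^{d+1} \bigr)$ and $\cL = (\cL^{d-1}, \cL^d)$. The key observation is that, for a fixed lattice site $k$ and a fixed field $q$, locality of $\cL$ guarantees that $\cL^{d}\bigl[q^{[k]}\bigr]$ and $\cL^{d-1}\bigl[q^{[k]}\bigr]$ are genuine smooth differential forms on $\R^N$ (their coefficients are smooth functions of $t_1,\dots,t_N$ assembled from finitely many shifts of $q$ and its derivatives), so the usual Stokes' theorem applies verbatim to the integrals $\int_{S_k^{d+1}} \d\cL^d\bigl[q^{[k]}\bigr]$ and $\int_{S_k^d} \d\cL^{d-1}\bigl[q^{[k]}\bigr]$.

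First I would use Definition \ref{def-form}(b)--(c) to expand
\[ \int_S \dd\cL = \sum_k \int_{S_k^d} \bigl( \Delta(\cL^d) - \d\cL^{d-1} \bigr)\bigl[q^{[k]}\bigr] \;+\; \sum_k \int_{S_k^{d+1}} \d\cL^d\bigl[q^{[k]}\bigr] . \]
Applying the classical Stokes' theorem at each lattice site converts the last sum into $\sum_k \int_{\partial S_k^{d+1}} \cL^d\bigl[q^{[k]}\bigr]$ and the $\d\cL^{d-1}$ contribution into $-\sum_k \int_{\partial S_k^d} \cL^{d-1}\bigl[q^{[k]}\bigr]$. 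Comparing with $\partial S = \bigl( \bigsqcup_k -\partial S_k^d , \bigsqcup_k ( \partial S_k^{d+1} \sqcup S_k^d \sqcup -S_{k+1}^d ) \bigr)$, these two terms already produce the contribution of $-\partial S_k^d$ paired with $\cL^{d-1}$ and of $\partial S_k^{d+1}$ paired with $\cL^d$.

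The genuinely semi-discrete bookkeeping sits in the term $\sum_k \int_{S_k^d} \Delta(\cL^d)\bigl[q^{[k]}\bigr]$, which I expect to be the only subtle point. Writing $\Delta = \mathsf{id} - \t^{-1}$ and unwinding the convention $\t f[q] = f[\t q]$, one checks that evaluating the shifted coefficient $\t^{-1}\cL^d$ at lattice site $k$ gives the coefficient $\cL^d$ evaluated at site $k-1$, i.e.\@ $\bigl(\t^{-1}\cL^d\bigr)\bigl[q^{[k]}\bigr] = \cL^d\bigl[q^{[k-1]}\bigr]$. Hence $\sum_k \int_{S_k^d} \Delta(\cL^d)\bigl[q^{[k]}\bigr] = \sum_k \int_{S_k^d} \cL^d\bigl[q^{[k]}\bigr] - \sum_k \int_{S_k^d} \cL^d\bigl[q^{[k-1]}\bigr]$, and reindexing $k \mapsto k+1$ in the second sum rewrites it as $\sum_k \int_{S_{k+1}^d} \cL^d\bigl[q^{[k]}\bigr]$. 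These are exactly the contributions of $S_k^d$ and of $-S_{k+1}^d$ in the semi-discrete part of $\partial S$ paired with $\cL^d$. Collecting the four pieces reproduces $\int_{\partial S}\cL$ term by term, which finishes the proof.

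I would also record the standing hypotheses that make the manipulations legitimate: each $S_k^{\bullet}$ should be a compact oriented submanifold with boundary (or the integrands should decay suitably), so that classical Stokes applies, and the sums over $k$ should have only finitely many nonzero terms (or converge absolutely), so that the reindexing $k\mapsto k+1$ is harmless. Beyond the classical theorem no further analytic input is needed: the content of the semi-discrete Stokes' theorem lies entirely in the sign conventions built into the definitions of $\dd$ and of $\partial S$.
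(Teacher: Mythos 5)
Your proposal is correct and follows essentially the same route as the paper: expand the semi-discrete integral by definition, apply the classical Stokes' theorem site by site to the $\d\cL^{d-1}$ and $\d\cL^d$ terms, unwind $\Delta = \mathsf{id} - \t^{-1}$ via $\bigl(\t^{-1}\cL^d\bigr)\bigl[q^{[k]}\bigr] = \cL^d\bigl[q^{[k-1]}\bigr]$, reindex, and match the pieces of $\partial S$. The only addition is your explicit statement of the compactness/finiteness hypotheses, which the paper leaves implicit.
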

	\begin{proof}
		Using our definitions and the smooth Stokes theorem we find
		\begin{align*}
		\int_{S} \dd \cL
		&= \sum_k \int_{S^{d}_k} \left( \Delta \cL^d\!\left[q^{[k]}\right] - \d \cL^{d-1}\!\left[q^{[k]}\right] \right) + \sum_k \int_{S^{d+1}_k} \d \cL^d\!\left[q^{[k]}\right] \\
		&= \sum_k \left( \int_{S^{d}_k} \cL^d\!\left[q^{[k]}\right] - \int_{S^{d}_k} \cL^d\!\left[q^{[k-1]}\right] - \int_{\partial S^{d}_k} \cL^{d-1}\!\left[q^{[k]}\right] \right) + \sum_k \int_{\partial S^{d+1}_k} \cL^d\!\left[q^{[k]}\right] \\
		&= -\sum_k \int_{\partial S^{d}_k} \cL^{d-1}\!\left[q^{[k]}\right]
		+ \sum_k \left( \int_{\partial S^{d+1}_k} \cL^d\!\left[q^{[k]}\right] + \int_{S^{d}_k} \cL^d\!\left[q^{[k]}\right]  - \int_{S^{d}_{k+1}} \cL^d\!\left[q^{[k]}\right]  \right)\!\!\!\!\!\! \\
		&= \int_{\partial S} \cL. \qedhere
		\end{align*}
	\end{proof}
	
	We close this section with an important lemma about commuting the shift operator and the total derivative with partial derivatives.
	\begin{lemma}\label{lemma-commute-ders}
		There holds
		\begin{align}
		& \der{\t^k f}{q_I} = \t^k \der{f}{\t^{-k} q_{I}} , \label{partialT} \\
		& \der{(\D_j f)}{q_I} = \der{f}{q_{I \setminus t_j}} + \D_j \der{f}{q_I} , \label{partialD}
		\end{align}
		where ${I \setminus t_j}$ is the multi-index obtained from $I$ be reducing the $j$-th entry by one, and the term containing it is taken to be zero if $I \not\ni t_j$.
	\end{lemma}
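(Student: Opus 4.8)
The plan is to regard both identities as the semi-discrete counterparts of standard formulas from the calculus on jet bundles, and to prove each by a direct computation in which all jet variables $q_I^{[m]}$ --- one for every multi-index $I$ and every discrete shift $m\in\Z$ --- are treated as independent coordinates. Since $f$ is local it depends on only finitely many of them, so every sum below is finite and may be differentiated term by term; moreover $\der{}{q_I^{[n]}}$ commutes with $\der{}{t_j}$ and with every $\der{}{q_J^{[m]}}$, these being partial derivatives with respect to independent variables. Throughout I suppress the reference lattice site, writing $q_I$ for $q_I^{[n]}$ as in the statement.

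For \eqref{partialT} I would first unwind the action of the shift on functions. Since $\t g[q]=g[\t q]$, the operator $\t^k$ acts on a function of the jet coordinates by relabelling each $q_I^{[m]}$ as $q_I^{[m+k]}$: if $f=F\bigl(\ldots,q_I^{[m]},\ldots\bigr)$ then $\t^k f=F\bigl(\ldots,q_I^{[m+k]},\ldots\bigr)$, in which the coordinate $q_I^{[n]}$ occupies the slot that $q_I^{[n-k]}$ occupied in $f$. Differentiating with respect to that slot and shifting the arguments back by $k$ gives $\der{\t^k f}{q_I^{[n]}}=\t^k\der{f}{q_I^{[n-k]}}$, which is \eqref{partialT}; alternatively one can verify the cases $k=\pm1$ and iterate. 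This step is pure bookkeeping.

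For \eqref{partialD} I would start from $\D_j=\der{}{t_j}+\sum_I q_{It_j}\der{}{q_I}$, where --- so that the formula stays consistent in the presence of lattice shifts --- the sum is understood to run over all jet coordinates, i.e.\@ over all multi-indices \emph{and} all discrete shifts. Applying $\der{}{q_I}$ to $\D_j f$ and commuting the partial derivatives, the piece coming from $\der{f}{t_j}$ is $\der{}{t_j}\der{f}{q_I}$, while the prolongation sum splits into two groups: the single term whose coefficient $q_{Jt_j}^{[m]}$ satisfies $Jt_j=I$ and $m=n$ contributes exactly $\der{f}{q_{I\setminus t_j}}$ (and nothing when $I\not\ni t_j$), and the remaining terms form $\sum q_{Jt_j}^{[m]}\der{}{q_J^{[m]}}\der{f}{q_I}$. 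This last sum together with $\der{}{t_j}\der{f}{q_I}$ is precisely $\D_j\bigl(\der{f}{q_I}\bigr)$, so adding the two groups yields \eqref{partialD}.

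Neither computation is hard. The one point that genuinely needs care, and where the semi-discrete case differs from the purely continuous one, is fixing the interpretation of $\D_j$ and of the sums over jet coordinates so that $\D_j$, the shift $\t$ and the operators $\der{}{q_I}$ are mutually consistent; once that is settled and the degenerate case $I\not\ni t_j$ is tracked carefully, nothing deeper is involved.
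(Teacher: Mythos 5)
Your proposal is correct and follows essentially the same route as the paper: \eqref{partialT} by the bookkeeping observation that $\t^k$ relabels the jet coordinates, and \eqref{partialD} by applying $\der{}{q_I}$ to the prolongation sum in $\D_j$, isolating the single term with $Jt_j=I$ to get $\der{f}{q_{I\setminus t_j}}$ and recognising the remainder as $\D_j\der{f}{q_I}$. The only cosmetic difference is that the paper drops the explicit $\der{}{t_j}$ term from the outset (since $f$ is autonomous), whereas you carry it along and let it be absorbed into $\D_j\der{f}{q_I}$; your remark that the jet-coordinate sum must range over all lattice shifts as well as all multi-indices is a useful clarification but not a departure from the paper's argument.
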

	\begin{proof}
		Equation \eqref{partialT} is a direct consequence of the fact that $\t$ acts on all instances of $q$ in the expression to the right of it.
		To derive Equation \eqref{partialD} we calculate
		\begin{align*}
		\der{(\D_j f)}{q_I}
		&= \der{}{q_I} \left( \sum_J \der{f}{q_J} q_{J t_j}\right) 
		= \sum_J \der{f}{q_I \partial q_J} q_{J t_j} + \der{f}{q_{I \setminus t_j}} 
		= \D_j \der{f}{q_I} + \der{f}{q_{I \setminus t_j}}. \qedhere
		\end{align*}
	\end{proof}
	
	\section{Semi-discrete Lagrangian multiforms}
	\label{sec:theory}
	
	In the continuous Lagrangian multiform theory, the central object is a Lagrangian $d$-from, which is integrated over arbitrary $d$-dimensional submanifolds of multi-time. In the following, this role will be played by a $\mathcal Q$-dependent semi-discrete 2-form $\cL[q] = \left( \cL^1[q] , \cL^2[q] \right)$ with
	\[ \cL^1[q] = \sum_{j>0} L_{0j} [q] \, \d t_j \]
	and
	\[ \cL^2[q] = \sum_{j > i > 0} L_{ij} [q]\,\d t_i \wedge \d t_j . \]
	If $j < i$ we define $L_{ij} = -L_{ji}$.
	
	\begin{definition}
		\label{def-critical}
		We say that a semi-discrete field $q:\Z \times \R^N \to Q$ is \defn{critical} for $\cL[q]$ if for every semi-discrete surface $S$ it satisfies
		\begin{equation}
		\label{critical}
		\der{}{\varepsilon}\bigg|_{\varepsilon = 0} \int_S \cL[q + \varepsilon v] = 0 
		\end{equation}
		for any field $v$ that vanishes (along with all of its derivatives) at the boundary of $S$.
	\end{definition}
	
	Equation \eqref{critical} can also be written as
	\[ \delta \int_S \cL[q] = \int_S \delta \cL[q] = 0 \]
	where $\delta$ is the Gateaux derivative in a direction to be specified (the arbitrary $v$ in Definition \ref{def-critical}). This $\delta$ can also be understood as the vertical exterior derivative in the variational bicomplex  (see for example \cite{anderson1992introduction} or \cite[Appendix A]{suris2016lagrangian}).
	
	\begin{definition}
		\label{def-varder}
		The 1- and 2-dimensional continuous \defn{variational derivatives} of a function $P$, with respect to $q_I$, are defined as
		\begin{align*}
		&\var{i}{P}{q_I} = \sum_{\alpha \in \N} (-1)^{\alpha} \D_i^\alpha \der{P}{q_{I t_i^\alpha}} , \\
		&\var{ij}{P}{q_I} = \sum_{\alpha,\beta \in \N} (-1)^{\alpha+\beta} \D_i^\alpha \D_j^\beta \der{P}{q_{I t_i^\alpha t_j^\beta}} ,
		\end{align*}
		where $\N = \{0,1,2,\ldots\}$.
		The \defn{semi-discrete variational derivatives} of a function $P$, with respect to $q_I$, are defined as
		\noeqref{sd-varder1,sd-varder2,sd-varder3}
		\begin{subequations}
			\label{sd-varders}
			\begin{align}
			&\var{0}{P}{q_I} = \der{}{q_{I}} \sum_{n \in \N} \t^{-n} P , \label{sd-varder1}\\
			&\var{0i}{P}{q_I} = \var{i}{}{q_{I}} \sum_{n \in \N} \t^{-n} P , \label{sd-varder2}\\
			&\var{0ij}{P}{q_I} = \var{ij}{}{q_{I}} \sum_{n \in \N} \t^{-n} P. \label{sd-varder3}
			\end{align}
		\end{subequations}
	\end{definition}
	
	To give a few examples, denoting $\bar q = \t q$ and $\underline q = \t^{-1} q$, we have
	\begin{align*}
	\var{0i}{q_{t_i}^2}{q} &= - 2\D_i q_{t_i} = -2q_{t_it_i}, \\
	\var{0ij}{\bar q_{t_i}^2}{q} &= - 2\D_i q_{t_i} = -2q_{t_it_i}, \\
	\var{0i}{q \bar q}{q} &= \bar q + \underline q, \\
	\var{0i}{q \underline q}{q} &= \underline q.
	\end{align*}
	The last example highlights the fact that there are no positive shifts in the right hand sides of Equations \eqref{sd-varders}.
	
	\begin{remark}
		The familiar variational derivative $\var{i}{}{q}$ is part of an exact complex, satisfying $\var{i}{}{q} \circ \D_i = 0$. This property fails for some of the variational derivatives of Definition \ref{def-varder}. For example, we have $\big(\var{i}{}{q_i} \circ \D_i \big) q = 1 \neq 0$. The analogous discrete property also fails, for example $\big(\var{0}{}{q} \circ \Delta \big) q = 1 \neq 0$. We still use the term ``variational derivative'' because these are the expressions that we encounter in the calculus of variations in multi-time.
	\end{remark}
	
	\begin{prop}\label{prop-mEL-P}
		The following are equivalent:
		\begin{enumerate}[(i)]
			\item\label{prop-mEL-deltadL-1} The field $q$ is critical.
			\item\label{prop-mEL-deltadL-2} $\delta \dd \cL = 0$.
			\item\label{prop-mEL-deltadL-3} For all multi-indices $I$ and all $n$ there holds
			\begin{equation}
			\label{var-P}
			\var{0ij}{P_{0ij}}{q_I^{[n]}} = 0 \qquad\text{and}\qquad \var{ijk}{P_{ijk}}{q_I^{[n]}} = 0 .
			\end{equation}
			
		\end{enumerate}
	\end{prop}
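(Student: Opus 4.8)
The plan is to prove the cycle $(\ref{prop-mEL-deltadL-1}) \Rightarrow (\ref{prop-mEL-deltadL-2}) \Rightarrow (\ref{prop-mEL-deltadL-3}) \Rightarrow (\ref{prop-mEL-deltadL-1})$, following the pattern of the continuous and fully discrete theories. Two structural facts drive everything: the semi-discrete Stokes theorem (Theorem~\ref{thm:stokes}), and the identity $\delta\dd = \dd\delta$, which is immediate from Definition~\ref{def-form} since $\delta$ is assembled from the partial derivatives $\der{}{q_I}$, and these commute with $\Delta$, with $\d$, and --- after the relabelling of Lemma~\ref{lemma-commute-ders}\,\eqref{partialT} --- with $\t$. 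For $(\ref{prop-mEL-deltadL-1}) \Rightarrow (\ref{prop-mEL-deltadL-2})$, let $V$ be any semi-discrete volume and $v$ a variation vanishing with its derivatives on $\partial V$. Since $\partial(\partial V)$ is empty (up to cancelling pairs of oppositely oriented copies), criticality applied to $S = \partial V$ gives $\int_{\partial V}\delta\cL = 0$, hence by Stokes
\[ \int_V \delta\dd\cL \;=\; \int_V \dd\,\delta\cL \;=\; \int_{\partial V}\delta\cL \;=\; 0 . \]
As $V$ and $v$ are arbitrary, localising $V$ to an arbitrarily small coordinate cell --- either a flat $3$-cell at a single lattice site or a stack of identical flat $2$-cells over all sites --- and localising $v$, the fundamental lemma of the calculus of variations forces the variational derivatives of all coefficients of $\dd\cL$ to vanish.

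For $(\ref{prop-mEL-deltadL-2}) \Leftrightarrow (\ref{prop-mEL-deltadL-3})$: from the coefficient expansion recorded after Definition~\ref{def-form} one has $\dd\cL = \big( \sum_{i<j} P_{0ij}\,\d t_i\wedge\d t_j , \sum_{i<j<k} P_{ijk}\,\d t_i\wedge\d t_j\wedge\d t_k \big)$ with $P_{0ij} = \Delta L_{ij} - \D_i L_{0j} + \D_j L_{0i}$ and $P_{ijk} = \D_i L_{jk} - \D_j L_{ik} + \D_k L_{ij}$. Varying $P_{ijk}$ and integrating by parts in the continuous directions $t_i,t_j,t_k$ produces the continuous variational derivative $\var{ijk}{P_{ijk}}{q_I^{[n]}}$. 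The semi-discrete coefficients $P_{0ij}$ live on the interface cells, so their contribution carries a sum over lattice sites $\sum_k\int_{V_k^2} \delta P_{0ij}[q^{[k]}]$; to collect the terms multiplying $v_I^{[n]}$ at a fixed site $n$ one reindexes $n = k+m$ and uses Lemma~\ref{lemma-commute-ders}\,\eqref{partialT}, turning $\sum_k \der{P_{0ij}[q^{[k]}]}{q_I^{[n]}}$ into $\der{}{q_I}\sum_{m\ge 0}\t^{-m}P_{0ij}$ --- only non-negative shifts survive, after discarding a $\Delta$-exact term --- and integrating by parts in $t_i,t_j$ gives exactly $\var{0ij}{P_{0ij}}{q_I^{[n]}}$ as in Definition~\ref{def-varder}. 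Hence $(\ref{prop-mEL-deltadL-2})$ unpacks precisely to the equations in $(\ref{prop-mEL-deltadL-3})$.

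For $(\ref{prop-mEL-deltadL-3}) \Rightarrow (\ref{prop-mEL-deltadL-1})$ --- the substantive part --- take an arbitrary semi-discrete surface $S$ and $v$ vanishing on $\partial S$, and expand $\int_S \delta\cL = \sum_k \int_{S^1_k}\delta\cL^1[q^{[k]}] + \sum_k \int_{S^2_k}\delta\cL^2[q^{[k]}]$, integrating by parts in the continuous directions. Each $2$-cell $S^2_k$ contributes interior terms plus a boundary integral over $\partial S^2_k$; by the boundary formula for a semi-discrete submanifold these interface boundaries pair with the curves $S^1_k$, which appear in $\partial S$ with both orientations (from levels $k$ and $k-1$), and a further integration by parts there yields more interior terms plus boundary integrals over $\partial S^1_k$ that vanish because $v$ vanishes on $\partial S$. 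One then checks that, after the same lattice reindexing, the surviving interior terms reorganise into a combination of $\var{0ij}{P_{0ij}}{q_I^{[n]}}$ and $\var{ijk}{P_{ijk}}{q_I^{[n]}}$ paired with $v^{[n]}$, which vanishes by hypothesis; criticality follows.

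The main obstacle is exactly this last reorganisation. Integration by parts on the flat $2$-cells naively produces the ordinary Euler--Lagrange expressions $\var{ij}{L_{ij}}{q_I}$, and one must verify that these, together with the interface contributions of $\cL^1$ and of the $\Delta$-term in $\dd\cL$, recombine into the variational derivatives of the coefficients of $\dd\cL$ and leave no residual constraints --- in particular no spurious conditions on how the Lagrangian coefficients may depend on derivatives transverse to a given cell. Carefully tracking which derivatives of $v$ are tangential versus transverse to each piece of $S$, and the shift relabelling when a cell lies at a site other than the one indexing the Lagrangian coefficient in play, is where the real work lies; the Stokes and localisation steps are routine by comparison. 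The generalisation to semi-discrete $d$-forms with $d \ge 3$, deferred to the appendix, follows the same template with more indices.
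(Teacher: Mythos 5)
Your skeleton (Stokes' theorem plus an analysis of the coefficients of $\dd\cL$) is the same as the paper's, but as written your cycle does not close, and the two substantive steps are missing. In your step (i)$\Rightarrow$(ii) you restrict to variations $v$ vanishing on $\partial V$; criticality on $S=\partial V$ only requires $v$ to vanish on $\partial(\partial V)=\emptyset$, so $v$ is in fact unconstrained. This is not a cosmetic point: with your constrained $v$, the fundamental lemma applied to $\int_V\delta\dd\cL=0$ only yields the variational derivative of each coefficient with respect to $q$ itself (the $I=\emptyset$ case), not the coefficientwise statement $\der{P_{0ij}}{q_I^{[n]}}=\der{P_{ijk}}{q_I^{[n]}}=0$ which is what (ii) means, and not even the full family of conditions in (iii). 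With unconstrained $v$ (as in the paper) no fundamental lemma is needed at all: the jet of $v$ at a point is arbitrary, so $\int_V\delta\dd\cL=0$ for all small $V$ gives (ii) directly. Moreover, your ``(ii)$\Leftrightarrow$(iii)'' paragraph only argues the easy unpacking direction (ii)$\Rightarrow$(iii); the nontrivial direction (iii)$\Rightarrow$(ii) rests on the telescoping reconstruction identity
\[
\der{P_{0ij}}{q_I^{[n]}} \;=\; \sum_{\alpha,\beta\in\{0,1\}}\left( \D_i^\alpha\D_j^\beta \var{0ij}{P_{0ij}}{q_{It_i^\alpha t_j^\beta}^{[n]}} \;-\; \t^{-1}\D_i^\alpha\D_j^\beta \var{0ij}{P_{0ij}}{q_{It_i^\alpha t_j^\beta}^{[n+1]}}\right)
\]
and its analogue for $P_{ijk}$ (the two-index analogue of Lemma~\ref{discpartlem} in the appendix), which does not appear anywhere in your proposal. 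Since your first step, as argued, does not deliver (ii), statement (ii) is never actually derived from anything.

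The second gap is your (iii)$\Rightarrow$(i): criticality on an \emph{arbitrary} semi-discrete surface is only asserted. You yourself flag the reorganisation of the integrated-by-parts terms as ``where the real work lies'', and for curved pieces $S^2_k$ the tangential integrations by parts do not produce the coordinate operators $\var{ij}{}{}$, so the claimed recombination of the interior terms into $\var{0ij}{P_{0ij}}{q_I^{[n]}}$ and $\var{ijk}{P_{ijk}}{q_I^{[n]}}$ is precisely the hard content and is left unproved. The paper avoids this computation entirely: from (ii), Stokes gives criticality on every surface that bounds a semi-discrete volume, and since the variational principle may be restricted to variations of arbitrarily small support and every semi-discrete surface is locally (part of) the boundary of a semi-discrete volume, criticality on all surfaces follows. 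To repair your argument, drop the boundary condition on $v$ in the first step so that you genuinely obtain (ii), replace your direct curved-surface computation by this locality argument for (ii)$\Rightarrow$(i), and supply the reconstruction identity above for (iii)$\Rightarrow$(ii).
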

	\begin{proof}
		Assume that $q$ is critical. Consider an arbitrary semi-discrete volume $V$ and integrate $\cL$ over its boundary. By Theorem \ref{thm:stokes} we have
		\[ \int_{\partial V} \cL =  \int_V \dd \cL .\]
		Since $q$ is critical, infinitesimal variations of the left hand side must vanish. It follows that
		\[ \int_V \delta \dd \cL = \delta \int_V \dd \cL = 0. \]
		Since $V$ is an arbitrary volume, it follows that $\delta \dd \cL = 0$.
		
		Following the above steps in reverse, we can see that if $\delta \dd \cL = 0$, then the variational principle is satisfied on all semi-discrete surfaces that are the boundary of some semi-discrete volume. One can show that this implies that the variational principle is satisfied on all semi-discrete surfaces. To do this, it is sufficient to observe that the variational principle can be restricted without loss of generality to variations with an arbitrarily small support, and that every discrete surface is \emph{locally} the boundary of some semi-discrete volume.
		
		To prove that \eqref{prop-mEL-deltadL-2} and \eqref{prop-mEL-deltadL-3} are equivalent, we will show that Equation \eqref{var-P} holds for all multi-indices $I$ if and only if 
		\begin{equation}
		\label{der-P}
		\der{P_{0ij}}{q_I^{[n]}} = 0 \qquad\text{and}\qquad \der{P_{ijk}}{q_I^{[n]}} = 0
		\end{equation}
		for all multi-indices $I$ and all $n$. This establishes the claimed equivalence, because the left hand sides of Equation \eqref{der-P} are the coefficients of $\delta \dd \cL$. The implication from Equation \eqref{der-P} to Equation \eqref{var-P} follows immediately from the definition of the variational derivatives. To prove the opposite implication, observe that we can write a partial derivative in terms of variational derivatives:
		\[
		\der{P_{0ij}}{q_I^{[n]}} =  \sum_{\alpha,\beta \in \{0,1\}} \left( \D_i^\alpha \D_j^\beta \var{0ij}{P_{0ij}}{q_{I t_i^\alpha t_j^\beta}^{[n]}} - \t^{-1} \D_i^\alpha \D_j^\beta \var{0ij}{P_{0ij}}{q_{I t_i^\alpha t_j^\beta}^{[n+1]}} \right)
		\]
		and
		\[
		\der{P_{ijk}}{q_I^{[n]}} = \sum_{\alpha,\beta,\gamma \in \{0,1\}} \D_i^\alpha \D_j^\beta \D_k^\gamma \var{ijk}{P_{ijk}}{q_{I t_i^\alpha t_j^\beta t_k^\gamma}^{[n]}}. \qedhere
		\]
	\end{proof}
	
	Property $(ii)$ of Proposition \ref{prop-mEL-P} will be useful later on, because it is satisfied if the coefficients of $\dd \cL$ are products of two factors that vanish on the equations of motion (or are sums of such products). Hence, if we construct a semi-discrete 2-form such that $\dd \cL$ attains such a ``double zero'' on solutions to a set of equations, then it is guaranteed that this set of equations implies the multi-time Euler-Lagrange equations. The equivalence between $(i)$ and $(iii)$ will be used in the proof of the following theorem.
	
	\begin{subequations}
		\begin{thm}
			\label{thm-EL}
			A field is critical if and only if all of the following \defn{multi-time Euler-Lagrange equations} hold for all $n \in \Z$:
			\begin{align}
			&\var{ij}{L_{ij}}{q_{I}^{[n]}} = 0 & \forall I \not\ni t_i,t_j , \label{EL-1}
			\\
			&\var{ij}{L_{ij}}{q_{I t_j}^{[n]}} - \var{ik}{L_{ik}}{q_{I t_k}^{[n]}} = 0 & \forall I \not\ni t_i , \label{EL-2}
			\\
			&\var{ij}{L_{ij}}{q_{I t_i t_j}^{[n]}} + \var{jk}{L_{jk}}{q_{I t_j t_k}^{[n]}} + \var{ki}{L_{ki}}{q_{I t_k t_i}^{[n]}}= 0 & \forall I , \label{EL-3}
			\\
			&\var{ij}{L_{ij}}{q_{I t_j}^{[n]}} + \var{0i}{L_{0i}}{q_{I}^{[n]}} = 0 & \forall I \not\ni t_i , \label{EL-D1}\\
			&\var{ij}{L_{ij}}{q_{I t_i t_j}^{[n]}} - \var{0j}{L_{0j}}{q_{I t_j}^{[n]}} + \var{0i}{L_{0i}}{q_{I t_i}^{[n]}} = 0 & \forall I . \label{EL-D2}
			\end{align}
		\end{thm}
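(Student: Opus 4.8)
The plan is to start from Proposition~\ref{prop-mEL-P}: the field $q$ is critical if and only if the equations \eqref{var-P} hold for all multi-indices $I$ and all $n$, where $P_{0ij}$ and $P_{ijk}$ are the coefficients of $\dd\cL$ (and $\var{ijk}{P}{q_I}$ is defined analogously to $\var{ij}{P}{q_I}$). Writing out the exterior derivative of $\cL=(\cL^1,\cL^2)$ as in Definition~\ref{def-form}(c) gives the (totally antisymmetric) expressions
\[ P_{ijk} = \D_i L_{jk} - \D_j L_{ik} + \D_k L_{ij}, \qquad P_{0ij} = \Delta L_{ij} - \D_i L_{0j} + \D_j L_{0i}. \]
So the whole proof amounts to rewriting $\var{0ij}{P_{0ij}}{q_I^{[n]}}$ and $\var{ijk}{P_{ijk}}{q_I^{[n]}}$ in terms of the variational derivatives of the single coefficients $L_{0j}$ and $L_{ij}$, and then matching the resulting conditions — as $I$ runs over all multi-indices — against the list \eqref{EL-1}--\eqref{EL-D2}.

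The computational core is a pair of identities describing how $\var{ij}{}{q_I}$ and $\var{ijk}{}{q_I}$ absorb a total derivative taken in one of their own directions. Using \eqref{partialD} from Lemma~\ref{lemma-commute-ders} and telescoping the resulting sum over the order of $\D_i$, I would establish
\[ \var{ij}{\D_i g}{q_I} = \var{j}{g}{q_{I\setminus t_i}}, \qquad \var{ijk}{\D_i g}{q_I} = \var{jk}{g}{q_{I\setminus t_i}}, \]
each understood to be zero when $I\not\ni t_i$, as in Lemma~\ref{lemma-commute-ders}. Together with the discrete telescoping identity $\sum_{m\ge0}\t^{-m}\Delta L = L$ — legitimate here because only finitely many terms contribute after a variational derivative with respect to a fixed $q_I^{[n]}$ — and the defining relations \eqref{sd-varders}, which express $\var{0i}{}{}$ and $\var{0ij}{}{}$ as $\var{i}{}{}$ and $\var{ij}{}{}$ precomposed with $\sum_{m\ge0}\t^{-m}$, this yields
\begin{align*}
\var{ijk}{P_{ijk}}{q_I} &= \var{jk}{L_{jk}}{q_{I\setminus t_i}} - \var{ik}{L_{ik}}{q_{I\setminus t_j}} + \var{ij}{L_{ij}}{q_{I\setminus t_k}}, \\
\var{0ij}{P_{0ij}}{q_I} &= \var{ij}{L_{ij}}{q_I} - \var{0j}{L_{0j}}{q_{I\setminus t_i}} + \var{0i}{L_{0i}}{q_{I\setminus t_j}},
\end{align*}
with the same zero-convention for any term whose index $I\setminus t_\bullet$ is undefined.

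The final step is a case analysis according to which of $t_i$, $t_j$, $t_k$ occur in $I$. From the second line: $I\not\ni t_i,t_j$ gives \eqref{EL-1}; exactly one of $t_i,t_j$ present gives \eqref{EL-D1} (after using $L_{ij}=-L_{ji}$ and $\var{ij}{}{}=\var{ji}{}{}$); both present gives \eqref{EL-D2}. From the first line: none of $t_i,t_j,t_k$ present is vacuous; exactly one present reproduces \eqref{EL-1} for the complementary pair; exactly two present gives \eqref{EL-2}; all three present gives \eqref{EL-3}. Conversely, every equation in the list arises in this way for a suitable $I$, so the two families of conditions are jointly equivalent to \eqref{EL-1}--\eqref{EL-D2}, and Proposition~\ref{prop-mEL-P} completes the argument. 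The only genuine difficulty I anticipate is bookkeeping: pinning down the signs in the two telescoping identities, carrying the lattice index $n$ correctly through $\sum_{m\ge0}\t^{-m}$, and performing the relabellings forced by the antisymmetry of $L_{ij}$ so that the cases line up with the precise form in which \eqref{EL-1}--\eqref{EL-D2} are written.
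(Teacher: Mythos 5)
Your proposal is correct and takes essentially the same approach as the paper: both reduce criticality, via Proposition \ref{prop-mEL-P}, to the vanishing of $\var{0ij}{P_{0ij}}{q_I^{[n]}}$ and $\var{ijk}{P_{ijk}}{q_I^{[n]}}$, rewrite these in terms of variational derivatives of the individual coefficients $L_{ij}$ and $L_{0i}$, and finish with the same case analysis on which of $t_i,t_j,t_k$ occur in the multi-index. Your two absorption identities together with the telescoping of $\sum_{m\geq 0}\t^{-m}\Delta$ are exactly the computation the paper carries out in one long display, yielding its Equations \eqref{varder-P0ij} and \eqref{varder-Pijk}.
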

		If $n$ is such that $L_{ij}$ does not depend on $q_I^{[n]}$ for any $I$, then it follows from \eqref{EL-D1} and \eqref{EL-D2} that 
		\noeqref{EL-D3,EL-D4} 
		\begin{align}
		& \var{0i}{L_{0i}}{q_{I}^{[n]}} = 0 & 
		\forall I \not\ni t_i , \label{EL-D3} 
		\\
		&\var{0j}{L_{0j}}{q_{I t_j}^{[n]}} - \var{0i}{L_{0i}}{q_{I t_i}^{[n]}} = 0 & \forall I . \label{EL-D4}
		\end{align}
	\end{subequations}%
	\begin{proof}[Proof of Theorem \ref{thm-EL}]
		We write $q$ for $q^{[n]}$, hence $\t^m q = q^{[m+n]}$. Using Lemma \ref{lemma-commute-ders} we find, for any multi-index $J$,
		\begin{align*}
		\der{P_{0ij}}{q_J} &= \der{L_{ij}}{q_{J}} - \t^{-1} \der{L_{ij}}{\t q_J} - \D_i \der{L_{0j}}{q_J} - \der{L_{0j}}{q_{J \setminus t_i}} + \D_j \der{L_{0i}}{q_J} + \der{L_{0i}}{q_{J \setminus t_j}} .
		\end{align*}
		Hence 
		\begin{align}
		\var{0ij}{P_{0ij}}{q_J} &= \sum_{m,\alpha,\beta \in \N} (-1)^{\alpha+\beta} \D_i^\alpha \D_j^\beta \der{\t^{-m} P_{0ij}}{q_{J t_i^\alpha t_j^\beta }} 
		\notag \\
		&= \sum_{m,\alpha,\beta \in \N} (-1)^{\alpha+\beta} \D_i^\alpha \D_j^\beta \left( \t^{-m} \der{L_{ij}}{\t^{m} q_{J t_i^\alpha t_j^\beta }} - \t^{-m-1} \der{L_{ij}}{\t^{m+1} q_{J t_i^\alpha t_j^\beta }} \right) \notag \\
		&\qquad - \sum_{m,\alpha,\beta \in \N} (-1)^{\alpha+\beta} \D_j^\beta \left(  \D_i^{\alpha+1} \der{\t^{-m} L_{0j}}{ q_{J t_i^\alpha t_j^\beta } } + \D_i^\alpha \der{\t^{-m} L_{0j}}{ q_{J t_i^{\alpha-1} t_j^\beta } } \right) \notag \\
		&\qquad + \sum_{m,\alpha,\beta \in \N} (-1)^{\alpha+\beta} \D_i^\alpha \left( \D_j^{\beta+1} \der{\t^{-m} L_{0i}}{ q_{J t_i^\alpha t_j^\beta } } + \D_j^\beta \der{\t^{-m} L_{0i}}{ q_{J t_i^\alpha t_j^{\beta-1} } } \right) 
		\notag \\
		&= \sum_{\alpha,\beta \in \N} (-1)^{\alpha+\beta} \D_i^\alpha \D_j^\beta \der{L_{ij}}{q_{J t_i^\alpha t_j^\beta }} \notag \\
		&\qquad - \sum_{m,\beta \in \N} (-1)^{\beta} \D_j^\beta \der{\t^{-m} L_{0j}}{ q_{J  t_i^{-1} t_j^\beta } }
		+ \sum_{m,\alpha \in \N} (-1)^{\alpha} \D_i^\alpha \der{\t^{-m} L_{0i}}{ q_{J t_i^\alpha t_j^{-1} } } \notag \\
		&= \var{ij}{L_{ij}}{q_{J}} - \var{0j}{L_{0j}}{q_{J t_i^{-1}}} + \var{0i}{L_{0i}}{q_{J t_j^{-1}}},
		\label{varder-P0ij}
		\end{align}
		where $J t_i^{-1} = J \setminus t_i$ denotes the multi-index obtained for $J$ by reducing the $i$-th entry by one if it is positive, and any term containing $J t_i^{-1}$ is taken to be zero if the $i$-th entry of $J$ is zero. Similarly, we find
		\begin{equation}
		\var{ijk}{P_{ijk}}{q_J} = \var{ij}{L_{ij}}{q_{J t_k^{-1}}}  + \var{jk}{L_{jk}}{q_{J t_i^{-1}}} + \var{ki}{L_{ki}}{q_{J t_j^{-1}}} .
		\label{varder-Pijk}
		\end{equation}
		
		By Proposition \ref{prop-mEL-P} it follows that the semi-discrete field is critical if and only if the expressions \eqref{varder-P0ij}--\eqref{varder-Pijk} equal zero for all $J$. Considering these equations for different types of multi-indices, we find equations \eqref{EL-1}--\eqref{EL-D2}:
		\begin{itemize}
			\item From Equation \eqref{varder-P0ij}, with $J = I \not\ni t_i,t_j$, we obtain Equation \eqref{EL-1}. If $J = I t_j$ with $I \not\ni t_i$ we find Equation \eqref{EL-D1}, and if $J = I t_i t_j$ we find Equation \eqref{EL-D2}
			
			\item From Equation \eqref{varder-Pijk} with $J = I t_k$, $I \not\ni t_i,t_j$, we obtain once again Equation \eqref{EL-1}. If $J = I t_j t_k$ with $I \not\ni t_i$ we find Equation \eqref{EL-2}, and if $J = I t_i t_j t_k$ we find Equation \eqref{EL-3}. If $J \not\ni t_i,t_j,t_k$, then \eqref{varder-Pijk} vanishes identically.
			\qedhere
		\end{itemize}
	\end{proof}
	
	\begin{remark}
	\label{remark-reduction}
	A semi-discrete Lagrangian 2-form $\cL$ on $\Z \times \R^N$ can be reduced to a continuous Lagrangian 1-form $\mathcal{M}$ on $\R^N$ by summing over the lattice sites. Let us illustrate this for the case of a periodic lattice, $\t^n q = q$. A continuous 1-form is obtained from the 1-form part of the semi-discrete 2-form $\cL = (\cL^1, \cL^2)$:
	\[ \mathcal{M} = \sum_{\alpha = 0}^{n-1} \cL^1 [q^{[\alpha]}] . \]
	Fix a curve $\gamma$ in $\R^N$ and consider the semi-discrete surface of integration $S = (S_k^1,S_k^2)$ with $S_k^1 = \gamma$ and $S_k^2 = \emptyset$. The semi-discrete integral of $\cL$ over $S$ is
	\[ \sum_{\alpha = 0}^{n-1} \int_\gamma L[q^{[\alpha]}] = \int_\gamma \mathcal{M} . \]
	Hence every action integral of the continuous 1-form $\mathcal{M}$ is also an action for the semi-discrete 2-form $\cL$, so every solution to the variational problem for the 2-form $\cL$ is also a solution to the variational problem for the 1-form $\mathcal{M}$.
	\end{remark}

	\section{Toda lattice}
	\label{sec:Toda}
	
	The Toda lattice \cite{toda1967vibration} is an integrable model consisting of $k$ particles on a line, with nearest-neighbour interaction. The deviation from equilibrium of one of the particles is given  by $q = q^{[n]} = q(n,t_1,t_2,\ldots)$. We use a bar-notation for shifts: 
	\[\p q = \t q = q(n+1,t_1,t_2,\ldots), \qquad \m q = \t^{-1} q = q(n-1,t_1,t_2,\ldots). \]
	For derivatives of $q$ we use the subscript notations: 
	\[ q_i = q_{t_i} = \der{q}{t_i}, \qquad q_{i j} = q_{t_i t_j} = \der{^2 q}{t_i \partial t_j}. \]
	
	The Toda lattice and the next two members of its hierarchy are given by 
	\noeqref{Toda-3} 
	\begin{subequations}
		\label{Toda}
		\begin{align}
		&q_{11} = \exp(\p q - q) - \exp( q - \m q) \label{Toda-1}, \\
		&q_2 = q_1^2 + \exp(\p q - q) + \exp( q - \m q) \label{Toda-2}, \\
		&q_3 = q_1^3 + (2 q_1 + \m q_1) \exp( q - \m q) + (2 q_1 + \p q_1) \exp(\p q - q) \label{Toda-3},
		\end{align}
	\end{subequations}
	where either open-ended or periodic boundary conditions can be used. In Section \ref{sec-toda-ham} we will sketch a systematic construction of this hierarchy. A continuous Lagrangian 1-form for this hierarchy is known, where the configuration is represented by a vector in $\R^k$ containing the positions of all particles. This ignores the physical intuition behind the system, where we think of the particles on a discrete lattice in space. To capture this, along with the continuous time evolution, we develop a semi-discrete 2-form for the Toda lattice. 
	
	Before we get started, let us think about whether lattice shifts could be eliminated form the system \eqref{Toda-1}--\eqref{Toda-2}. By considering these equations as a linear system for the two exponential terms, we find an equivalent system
	\begin{subequations}
		\label{a-PDEs}
		\begin{align}
		\exp( \bar q - q) = \frac{1}{2}(q_2 + q_{11} - q_1^2) \label{a-PDE}, \\
		\exp( q - \underline q) = \frac{1}{2}(q_2 - q_{11} - q_1^2) \label{ma-PDE}.
		\end{align}
	\end{subequations}%
	This shows that we can eliminate one of the lattice shifts from Equations \eqref{Toda}. In fact, Equations \eqref{a-PDEs} can be understood as an NLS-type system
	\begin{align*}
	U_2 = U_{11} + 2U^2 V, \qquad V_2 = - V_{11} -  2 U V^2,
	\end{align*}
	with variables $U = \exp(\p q)$ and $V = \exp(q)$ \cite{adler1997class}. This observation relates the Toda lattice to a system of integrable PDEs by promoting a lattice shift to an additional dependent variable. It is far from obvious if this additional variable can be eliminated to obtain one scalar PDE. As we will see below, the Lagrangian multiform will provide a solution to this problem.

	Our construction is inspired on the known continuous 1-form, which can be obtained for example from the discrete-time Toda lattice using a continuum limit \cite{vermeeren2019continuum}, or from the variational symmetries of the system \cite{petrera2017variational}. Alternatively, it could be obtained from the Hamiltonian formulation of the hierarchy (see e.g.\@ \cite{suris2003problem}) using the methods of \cite{vermeeren2021hamiltonian}. Here we adapt the latter approach to yield a semi-discrete 2-form.
	
	\subsection{Hamiltonian formulation}
	\label{sec-toda-ham}
	
	The geometric structure of the Toda lattice is usually presented in Flaschka variables \cite{flaschka1974toda}
	\[ a = \exp(\bar q - q), \qquad b = q_1, \]
	with the Poisson brackets
	\[ \{a,b\} = a, \qquad \{a, \bar b\} = -a . \]
	We consider Hamilton functions
	\[ H_i = \sum_{\alpha \in \Z} \t^\alpha h_i = \ldots + \underline{h_i} + h_i + \overline{h_i} + \ldots, \]
	where
	\noeqref{Toda-h1,Toda-h2,Toda-h3}
	\begin{subequations}
		\label{Toda-hs}
		\begin{align}
		h_1 &= \frac{1}{2} b^2 + a , \label{Toda-h1}\\
		h_2 &= \frac{1}{3} b^3 + a (b + \bar b) , \label{Toda-h2}\\
		h_3 &= \frac{1}{4} b^4 + a (b^2 + \bar b^2) + a b \bar b + a \bar a + \frac{1}{2} a^2 , \ \ldots \label{Toda-h3}
		\end{align}
	\end{subequations}%
	Note that the subscripts on $h$ and $H$ are labels, not derivatives.
	These Hamilton functions can be obtained from the usual Lax formulation of the Toda lattice, $\der{L}{t_1} = [B,L]$, by taking $H_i = \frac{1}{i}\mathrm{tr}{L^i}$  (see \cite{flaschka1974toda}, also \cite[Chapter 3]{suris2003problem}). We choose to write them as a sum $H_i = \sum_{\alpha \in \Z} \t^\alpha h_i$ in such a way that $h_i$ does not contain any negative shifts of $a$ or $b$.
	
	The corresponding equations of motion are
	\noeqref{ab-1,ab-2}
	\begin{subequations}
		\label{ab-equations}
		\begin{align}
		& a_i = \{H_i,a\} = \left( \der{H_i}{\bar b} - \der{H_i}{b} \right) a , 
		\label{ab-1}\\
		& b_i = \{H_i,b\} = \Delta \left( \der{H_i}{a} a \right) . 
		\label{ab-2}
		\end{align}
	\end{subequations}
	In the original coordinates, the equations of motion are
	\noeqref{q-1,q-2}
	\begin{subequations}
		\label{q-equations}
		\begin{align}
		& q_i = Q_i := \der{H_i}{b} , \label{q-1}\\
		& q_{1i} = B_i := \Delta \left( \der{H_i}{a} a \right) , \label{q-2}
		\end{align}
	\end{subequations}
	where the subscripts on $Q$ and $B$ are labels, not derivatives.
	
	An elementary calculation shows that
	\begin{equation}
	\label{poisson}
	\{ H_i, H_j \} = \sum_{\alpha \in \Z} \t^\alpha ( B_i Q_j - Q_i B_j) . 
	\end{equation}
	It is well-known that the $H_i$ are in Poisson involution, so this sum must be zero. It follows that the summand can be written as a difference:
	\begin{equation}\label{F-def}
	B_i Q_j - Q_i B_j = \Delta F_{ij}
	\end{equation}
	for some $F_{ij}$.
	
	\subsection{Semi-discrete 2-form}
	
	Following the construction of Lagrangian 1-forms from Hamiltonians in involution \cite{vermeeren2021hamiltonian}, we find a continuous Lagrangian 1-form for the Toda lattice with coefficients
	\[ L_{j} = \sum_\alpha \t^\alpha( q_1 q_j - h_j ), \]
	where the first few $h_j$ are given in Equation \eqref{Toda-hs}.
	We now look for a semi-discrete 2-form that reduces to this 1-form by the method of Remark \ref{remark-reduction}. This motivates the choice
	\begin{equation}
	\label{toda-L0j}
	 L_{0j} = q_1 q_j - h_j ,
	\end{equation}
	but does not guide our choice of coefficients $L_{ij}$.
	
	We want to construct coefficients $L_{ij}$ such that the exterior derivative of the semi-discrete 2-form 
	\[ \left( \sum_j L_{0j}\,\d t_j \ ,\ \sum_{i<j} L_{ij} \,\d t_i \wedge \d t_j \right) \]
	vanishes on solutions. Furthermore, in light of Proposition \ref{prop-mEL-P}, we would like it to attain a double zero on solutions.
	 The following fact, which can be thought of as a local version of Equation \eqref{poisson}, will come in useful.
	
	\begin{lemma}\label{lemma-dh}
		On the equations of motion \eqref{q-equations}, there holds
		\begin{align*}
		\D_i h_j = - B_j q_i + Q_j q_{1i} 
		+ \Delta \left( \der{H_j}{a} a \t q_i + \sum_{\alpha \geq 1} \sum_{\beta = 1}^{\alpha} \t^\beta \left( \der{(\t^{-\alpha} h_j)}{a} a_i + \der{(\t^{-\alpha} h_j)}{b} b_i \right) \right) .
		\end{align*}
	\end{lemma}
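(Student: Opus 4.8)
The plan is to obtain the formula by a direct chain-rule computation of $\D_i h_j$ followed by repeated discrete summation by parts. Since $h_j$ is a local function of the Flaschka variables $a = \exp(\p q - q)$ and $b = q_1$ and their non-negative shifts, the chain rule (with $\D_i$ commuting through $\t$) gives, on a solution of \eqref{q-equations},
\[ \D_i h_j = \sum_{\alpha \geq 0} \left( \der{h_j}{\t^\alpha a}\, \t^\alpha a_i + \der{h_j}{\t^\alpha b}\, \t^\alpha b_i \right), \]
a finite sum, where $b_i = q_{1i}$ and $a_i = (\t q_i - q_i)\, a$. The only further ingredients I would use are the elementary telescoping identity $\t^m X = X + \Delta\big(\sum_{\beta=1}^m \t^\beta X\big)$ (empty sum when $m=0$), the fact that $h_j$ contains no negative shifts — which yields the identities $\der{H_j}{b} = \sum_{\alpha\geq 0}\t^{-\alpha}\der{h_j}{\t^\alpha b}$ and $\der{H_j}{a}\,a = \sum_{\alpha\geq 0}\t^{-\alpha}\!\big(\der{h_j}{\t^\alpha a}\,\t^\alpha a\big)$ — and the defining relations $Q_j = \der{H_j}{b}$, $B_j = \Delta\big(\der{H_j}{a}\,a\big)$ from \eqref{q-equations}.

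For the group of terms involving $b$, I would apply the telescoping identity to each summand written as $\der{h_j}{\t^\alpha b}\,\t^\alpha b_i = \t^\alpha\!\big((\t^{-\alpha}\der{h_j}{\t^\alpha b})\, b_i\big)$ and sum over $\alpha$; the on-site part collapses, by the no-negative-shifts identity, to $\big(\sum_\alpha \t^{-\alpha}\der{h_j}{\t^\alpha b}\big)q_{1i} = Q_j\, q_{1i}$, and the remainder is precisely $\Delta\big(\sum_{\alpha\geq 1}\sum_{\beta=1}^\alpha \t^\beta(\der{(\t^{-\alpha}h_j)}{b}\, b_i)\big)$, the $b_i$-contribution in the statement. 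For the group of terms involving $a$, set $g_\alpha = \der{h_j}{\t^\alpha a}\,\t^\alpha a$, so that $\t^\alpha a_i = (\t^{\alpha+1}q_i - \t^\alpha q_i)\t^\alpha a$ gives this group equal to $\sum_{\alpha\geq 0} g_\alpha(\t^{\alpha+1}q_i - \t^\alpha q_i)$. What remains is to verify
\[ \sum_{\alpha\geq 0} g_\alpha(\t^{\alpha+1}q_i - \t^\alpha q_i) = -B_j\, q_i + \Delta\!\left( \der{H_j}{a}\,a\, \t q_i + \sum_{\alpha\geq 1}\sum_{\beta=1}^\alpha \t^\beta\!\big(\der{(\t^{-\alpha}h_j)}{a}\, a_i\big)\right), \]
which I would prove by expanding the right-hand side: using $\der{(\t^{-\alpha}h_j)}{a}\, a_i = (\t^{-\alpha}g_\alpha)(\t q_i - q_i)$, the inner double sum equals $\sum_{\alpha\geq 1}\sum_{\beta=1}^\alpha (\t^{\beta-\alpha}g_\alpha)(\t^{\beta+1}q_i - \t^\beta q_i)$, whose $\Delta$ telescopes in $\beta$ to $\sum_{\alpha\geq 1}\big(g_\alpha(\t^{\alpha+1}q_i - \t^\alpha q_i) - (\t^{-\alpha}g_\alpha)(\t q_i - q_i)\big)$; adding $\Delta(\der{H_j}{a}\,a\,\t q_i)$, the coefficient of $\t q_i$ reduces to $g_0$ and the on-site coefficient reduces to $\sum_{\gamma\geq 1}\t^{-\gamma}(g_\gamma - g_{\gamma-1}) = \Delta(\der{H_j}{a}\,a) - g_0 = B_j - g_0$, so the right-hand side collapses to $g_0(\t q_i - q_i) + \sum_{\alpha\geq 1}g_\alpha(\t^{\alpha+1}q_i - \t^\alpha q_i)$, which is the left-hand side. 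Adding the $a$- and $b$-groups then yields the claim.

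The routine but delicate part is the bookkeeping of shift indices in these summation-by-parts steps: keeping the ranges $\alpha\geq 0$ versus $\alpha\geq 1$ and $1\le\beta\le\alpha$ straight, handling the mismatch between the shifts $\t^{\alpha+1}$ and $\t^\alpha$ in $\t^\alpha a_i$, and in particular making sure that the leftover on-site coefficient in the $a$-group is exactly $-B_j$ and not $-g_0 = -\der{h_j}{a}\,a$ — which is where the definition $B_j = \Delta(\der{H_j}{a}\,a)$, combined with the no-negative-shifts normalisation of $h_j$, does the work. All sums are finite by locality, so there are no convergence subtleties; note also that the involutivity \eqref{poisson} is not used in the proof, which makes the lemma a local identity refining it.
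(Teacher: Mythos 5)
Your proposal is correct and follows essentially the same route as the paper's proof: the chain rule exploiting that $h_j$ has no negative shifts, the telescoping identity $\t^\alpha X = X + \Delta\sum_{\beta=1}^{\alpha}\t^\beta X$, and the final discrete summation by parts $\der{H_j}{a}a(\t q_i - q_i) = -\Delta\big(\der{H_j}{a}a\big)q_i + \Delta\big(\der{H_j}{a}a\,\t q_i\big)$ producing $-B_j q_i$. The only (cosmetic) difference is that you verify the $a$-group by expanding the right-hand side back, whereas the paper performs the same manipulation in the forward direction on the on-site term.
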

	\begin{proof}
		Since $h_j$ does not contain any negative shifts, we can write
		\begin{align*}
		\D_i h_j &= \sum_{\alpha \geq 0} \left( \der{h_j}{\t^\alpha a} \t^\alpha a_i + \der{h_j}{\t^\alpha b} \t^\alpha b_i \right) \\
		&= \sum_{\alpha \geq 0} \t^\alpha \left( \der{\t^{-\alpha} h_j}{a} a_i + \der{\t^{-\alpha} h_j}{b} b_i \right) \\
		&= \sum_{\alpha \geq 0}\left( \der{\t^{-\alpha} h_j}{a} a_i + \der{\t^{-\alpha} h_j}{b} b_i \right) 
		+ \sum_{\alpha \geq 1} \sum_{\beta=1}^\alpha \Delta\left( \t^\beta \left( \der{\t^{-\alpha} h_j}{a} a_i + \der{\t^{-\alpha} h_j}{b} b_i \right) \right) .
		\end{align*}
		Using once more that $h_j$ does not contain any negative shifts, it follows that
		\[ \D_i h_j = \der{H_j}{a} a (\t q_i - q_i) + \der{H_j}{b} q_{1i}
		+ \sum_{\alpha \geq 1} \sum_{\beta=1}^\alpha \Delta\left( \t^\beta \left( \der{\t^{-\alpha} h_j}{a} a_i + \der{\t^{-\alpha} h_j}{b} b_i \right) \right) .
		\]
		To finish the proof, observe that the first term in the right hand side is equal to
		\[ \der{H_j}{a} a (\t q_i - q_i) = -\Delta\left(\der{H_j}{a} a\right) q_i + \Delta\left( \der{H_j}{a} a \t q_i \right) \]
		and use Equations \eqref{q-equations}.
	\end{proof}
	
	The computation in the proof of Theorem \ref{thm-Lij} below, which uses Lemma \ref{lemma-dh}, shows that 
	the exterior derivative $\dd \cL$ attains a double zero on solutions to the Toda hierarchy if we set
	\begin{equation}
	\label{toda-Lij}
	\begin{split}
	L_{ij} &= \der{H_i}{a} a \t q_j + \sum_{\alpha \geq 1} \sum_{\beta = 1}^{\alpha} \t^\beta \left( \der{(\t^{-\alpha} h_i)}{a} a_j + \der{(\t^{-\alpha} h_i)}{b} b_j \right) \\
	&\quad - \der{H_j}{a} a \t q_i - \sum_{\alpha \geq 1} \sum_{\beta = 1}^{\alpha} \t^\beta \left( \der{(\t^{-\alpha} h_j)}{a} a_i + \der{(\t^{-\alpha} h_j)}{b} b_i \right) - F_{ij},
	\end{split}
	\end{equation}
	where $F_{ij}$ is as in Equation \eqref{F-def}.
	
	\begin{thm}\label{thm-Lij}
		Let $\cL$ be the semi-discrete 2-form with coefficients given by Equations \eqref{toda-L0j} and \eqref{toda-Lij}. There holds $\delta \dd \cL = 0$ on solutions to the Toda hierarchy, hence the Toda hierarchy implies the multi-time Euler-Lagrange equations.
	\end{thm}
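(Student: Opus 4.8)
The plan is to apply the ``double zero'' criterion recorded after Proposition \ref{prop-mEL-P}: if every coefficient of $\dd\cL$ is a sum of products of two factors, each vanishing on solutions of the Toda hierarchy \eqref{q-equations}, then $\delta\dd\cL=0$ on solutions, and hence — by the discussion following Proposition \ref{prop-mEL-P}, which invokes Theorem \ref{thm-EL} — the hierarchy implies the multi-time Euler--Lagrange equations. Using \eqref{exterior-der} and the coefficient expansion that follows it, the coefficients of $\dd\cL$ are
\[ P_{0ij} = \Delta L_{ij} - \D_i L_{0j} + \D_j L_{0i} \qquad\text{and}\qquad P_{ijk} = \D_i L_{jk} - \D_j L_{ik} + \D_k L_{ij}, \]
so it suffices to exhibit each of $P_{0ij}$ and $P_{ijk}$ as such a sum. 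Recall that the equations of motion \eqref{q-equations} read $q_i=Q_i$ and $q_{1i}=B_i$, so the factors we aim for are $q_i-Q_i$ and $q_{1i}-B_i$.

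For $P_{0ij}$, substitute $L_{0j}=q_1q_j-h_j$ from \eqref{toda-L0j} and write $L_{ij}=G_{ij}-G_{ji}-F_{ij}$, where $G_{ij}$ denotes the expression
\[ G_{ij} = \der{H_i}{a}\, a\, \t q_j + \sum_{\alpha \geq 1}\sum_{\beta=1}^{\alpha} \t^\beta\!\left( \der{(\t^{-\alpha} h_i)}{a}\, a_j + \der{(\t^{-\alpha} h_i)}{b}\, b_j \right) \]
appearing in \eqref{toda-Lij}, chosen precisely so that Lemma \ref{lemma-dh} reads $\D_i h_j = -B_j q_i + Q_j q_{1i} + \Delta G_{ji}$. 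Expanding $\D_i L_{0j}$ and $\D_j L_{0i}$ with this identity, the $\Delta G$-terms cancel against those generated from $\Delta L_{ij}$, the term $\Delta F_{ij}$ is eliminated using \eqref{F-def}, and the mixed second-derivative terms $q_1 q_{ij}$ cancel. Regrouping what remains yields
\[ P_{0ij} = (q_i - Q_i)(q_{1j} - B_j) - (q_{1i} - B_i)(q_j - Q_j), \]
which is a double zero of the required form.

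The coefficient $P_{ijk}$ is treated in the same spirit, but the computation is longer, and this is the step I expect to be the main obstacle. One expands $\D_i L_{jk}$ and its two cyclic images using $L_{jk}=G_{jk}-G_{kj}-F_{jk}$. The contributions from the $G$-terms can be reorganised — once more invoking Lemma \ref{lemma-dh} to rewrite total derivatives of the $h$'s, and using the equations of motion \eqref{q-equations} — so that they assemble into products having a factor $q_i-Q_i$ or $q_{1i}-B_i$. The delicate part is the purely $F$-dependent combination $-\D_i F_{jk} + \D_j F_{ik} - \D_k F_{ij}$: since \eqref{F-def} only pins down $F_{ij}$ through $\Delta F_{ij}=B_iQ_j-Q_iB_j$, controlling this cyclic sum requires the full involutivity \eqref{poisson} of the Hamiltonians (equivalently, the cocycle-type identity it forces on the $F_{ij}$), again combined with \eqref{q-equations}. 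Once the bookkeeping confirms that every term not proportional to $q_i-Q_i$ or $q_{1i}-B_i$ cancels, $P_{ijk}$ is exhibited as a sum of double zeros, and the theorem follows by the criterion quoted at the outset. This is the direct semi-discrete analogue of the continuous computation carried out in \cite{vermeeren2021hamiltonian}.
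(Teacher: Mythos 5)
Your treatment of $P_{0ij}$ is exactly the paper's: Lemma \ref{lemma-dh} plus the cancellation of the $\Delta G$- and $q_1q_{ij}$-terms and the substitution $\Delta F_{ij}=B_iQ_j-Q_iB_j$ gives precisely the factorisation \eqref{Toda-closure}. The gap is in the step you yourself flag as the main obstacle: the coefficient $P_{ijk}$. You propose to exhibit $P_{ijk}$ itself as a sum of double zeros by direct expansion, but this is not carried out, and the ingredients you invoke do not suffice to carry it out. Equation \eqref{F-def} (equivalently the involutivity \eqref{poisson}) only determines $\Delta F_{ij}$, not $F_{ij}$ itself nor its total derivatives $\D_k F_{ij}$, so the cyclic combination $-\D_iF_{jk}+\D_jF_{ik}-\D_kF_{ij}$ is not controlled by the data you cite; at best its image under $\Delta$ is. Moreover, "using the equations of motion \eqref{q-equations}" to reorganise the $G$-terms is not legitimate in a double-zero argument: replacing a term by its on-shell value changes $P_{ijk}$ by a single zero, and a single zero added to a sum of double zeros destroys exactly the property $\delta P_{ijk}=0$ on solutions that you need. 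So as written the second half of the proof is a plan, not a proof, and it is unclear that the planned factorisation of $P_{ijk}$ even holds without a judicious (and unspecified) choice of the $F_{ij}$.

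The paper sidesteps this entirely, and you could adopt its argument to close the gap: since $\Delta$ commutes with total derivatives, $\Delta P_{ijk}=\D_iP_{0jk}-\D_jP_{0ik}+\D_kP_{0ij}$, which inherits a double zero from the already established factorisation of the $P_{0ij}$; hence $\der{(\Delta P_{ijk})}{q_I^{[n]}}=0$ on solutions. By Lemma \ref{lemma-commute-ders} this reads $\der{P_{ijk}}{q_I^{[n]}}-\t^{-1}\der{P_{ijk}}{q_I^{[n+1]}}=0$, and because $P_{ijk}$ involves only finitely many lattice shifts, a downward induction starting from the maximal shift $n_{max}$ yields $\der{P_{ijk}}{q_I^{[n]}}=0$ for all $I$ and $n$, i.e.\@ $\delta P_{ijk}=0$ on solutions, with no explicit knowledge of $F_{ij}$ and no factorisation of $P_{ijk}$ required. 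Together with your (correct) factorisation of $P_{0ij}$ and Proposition \ref{prop-mEL-P}, this gives $\delta\dd\cL=0$ on the Toda hierarchy and hence the theorem.
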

	\begin{proof}
		Equation \eqref{exterior-der} states that $\dd \cL$ has coefficients $P_{ijk} = \D_i L_{jk} - \D_j L_{ik} + \D_k L_{ij}$ and
		\begin{align*}
		P_{0ij} & = \Delta L_{ij} - \D_i L_{0j} + \D_j L_{0i} \\
		&=  -q_{1i} q_j + q_{1j} q_i + \D_i h_j - \D_j h_i + \Delta L_{ij} .
		\end{align*} 
		Using Lemma \ref{lemma-dh} we find
		\begin{align}
		P_{0ij} 
		&= -q_{1i} q_j + q_{1j} q_i - B_j q_i + Q_j q_{1i} + B_i q_j - Q_i q_{1j} - \Delta F_{ij} \notag \\
		&= -(q_{1i} - B_i) (q_j - Q_j) + (q_{1j} - B_j) (q_i - Q_i). \label{Toda-closure}
		\end{align}
		Hence $P_{0ij}$ attains a double zero on solutions.
		
		In addition, we have
		\begin{align*}
		\Delta P_{ijk} &= \D_i \Delta L_{jk} - \D_j \Delta L_{ik} + \D_k \Delta L_{ij} \\
		&= \D_i P_{0jk} - \D_j P_{0ik} + \D_k P_{0ij},
		\end{align*}
		which also attains a double zero on solutions. Therefore, on solutions,
		\begin{align*}
		\der{\Delta P_{ijk}}{q_I^{[n]}}=   \der{ P_{ijk}}{q_I^{[n]}}-\der{\t^{-1} P_{ijk}}{q_I^{[n]}}=0
		\end{align*}
		for all $i,j,k,I$, and $n$.  Using Lemma \ref{lemma-commute-ders}, this becomes
		\begin{align}\label{eqn-redf}
		\der{ P_{ijk}}{q_I^{[n]}}-\t^{-1}\der{ P_{ijk}}{q_I^{[n+1]}}=0.    
		\end{align}
		For every $I$ such that, for some $n$, $q^{[n]}_I$ appears in $P_{ijk}$, we let $n_{max}$ be the maximum $n$ such that $q^{[n]}_I$ appears in $P_{ijk}$.  Then \eqref{eqn-redf} tells us that, on solutions, 
		\begin{align*}
		\der{ P_{ijk}}{q_I^{[n_{max}]}}=0. 
		\end{align*}
		It then follows inductively from \eqref{eqn-redf} that 
		\begin{align*}
		\der{ P_{ijk}}{q_I^{[n]}}=0 
		\end{align*}
		for all $I$ and $n$, or equivalently that $\delta P_{ijk} = 0$ on solutions. Hence, on solutions to the Toda hierarchy, $\delta \dd \cL = 0$. 
		Finally, by Proposition \ref{prop-mEL-P} this means that the multi-time Euler-Lagrange equations are consequences of the Toda hierarchy.
	\end{proof}
	
	\subsection{Explicit calculations}
	
	Using the Hamiltonians $h_1,h_2$ from Equation \eqref{Toda-hs}, we find
	\begin{align*}
	&L_{01} = \frac{1}{2} q_1^2 - \exp(\p q - q) \\
	&L_{02} = q_1 q_2 - \frac{1}{3} q_1^3 - (q_1 + \p q_1) \exp(\p q - q) 
	\end{align*}
	Theorem \ref{thm-Lij} gives us the coefficient
	\[ L_{12} = -(b + \p b) a \p q_1 - a \p b_1 + a \p q_2 - F_{12}, \]
	where $F_{12}$ should satisfy
	\begin{align*}
	\Delta F_{12}
	&= B_1 Q_2 - Q_1 B_2 \\
	&= (\Delta \m a)(b^2 + a + \m a) - b \Delta( (b + \m b) \m a)  \\
	&= a^2 - \m a^2 - b \bar{b} a + b \m b \m a \\
	&= \Delta( a^2 - \p b b a).
	\end{align*}
	Hence for $L_{12}$ we could take
	\begin{align*}
	L_{12} &= a \p q_2 - (q_1 + \p q_1) a \p q_1 - a \p q_{11} - a^2 + \p q_1 q_1 a \\
	&= - a ( \p q_1^2 + \p q_{11} - \p q_2 + a) \\
	&= - \left( \frac{1}{2}(\p q_1^2 + \p q_{11} - \p q_2) + a \right)^2 + \frac{1}{4}(\p q_1^2 + \p q_{11} - \p q_2)^2.
	\end{align*}
	From Equation \eqref{ma-PDE} we see that the first term attains a double zero on solutions, hence we obtain an equivalent semi-discrete two-form if we leave it out and take
	\begin{equation*}
	L'_{12} = \frac{1}{4}(\p q_1^2 + \p q_{11} - \p q_2)^2 .
	\end{equation*}
	Note that the factorisation \eqref{Toda-closure} is valid for $L_{12}$. With $L'_{12}$ we would get a different expression for $P_{012}$, which also attains a double zero on solutions.
	
	Theorem \ref{thm-Lij} now implies that $q$ satisfies the Toda equations \eqref{Toda} if and only if it is critical for the semi-discrete 2-form $\cL = \left( L_{01} \,\d t_1 + L_{02} \,\d t_2 \,,\,  L'_{12} \,\d t_1 \wedge \d t_2 \right)$. Indeed we can recover the first Toda equation from the variational principle by integrating $\cL$ over the semi-discrete surface spanned by $t_1$ and the discrete direction (i.e.\@ consisting of copies at each lattice site of a line in the $t_1$-direction). Similarly, the second Toda equation can be obtained using the semi-discrete surface spanned by $t_2$ and the discrete direction. There are many other semi-discrete surfaces we could consider. Of particular interest are those that consist only of the $(t_1,t_2)$-plane at one single lattice site. The resulting Euler-Lagrange equation is the subject of the following corollary.
	
	\begin{cor}
		The Toda hierarchy \eqref{Toda} implies the PDE
		\begin{equation}
		\label{pde-q22}
		\frac{1}{2} q_{22} - q_{11} q_2 - 2 q_{12} q_1 - \frac{1}{2} q_{1111} + 3 q_1^2 q_{11} = 0 .
		\end{equation}
	\end{cor}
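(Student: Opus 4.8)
The plan is to identify \eqref{pde-q22} with the single multi-time Euler--Lagrange equation attached to the $(t_1,t_2)$-plane sitting at one lattice site. By Theorem \ref{thm-Lij} a solution $q$ of the Toda hierarchy is critical for $\cL = \left( L_{01}\,\d t_1 + L_{02}\,\d t_2 \,,\, L'_{12}\,\d t_1 \wedge \d t_2 \right)$, so by Definition \ref{def-critical} the action is critical on every semi-discrete surface. I would take the surface $S$ with $S^2_{n-1} = \R^2$ and all other components empty, together with a compactly supported variation of $q^{[n]}$ inside that plane. Then $\int_S \cL = \int_{\R^2} L'_{12}\!\left[q^{[n-1]}\right]\d t_1 \wedge \d t_2$, and criticality reads $\var{12}{L'_{12}}{q^{[n]}} = 0$; this is precisely \eqref{EL-1} of Theorem \ref{thm-EL} with $i=1$, $j=2$ and empty multi-index, the reduction to one lattice site being justified exactly as in the proof of Proposition \ref{prop-mEL-P} (criticality may be tested with variations of arbitrarily small support).

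The point is that, evaluated at the site $n-1$, one has $L'_{12}\!\left[q^{[n-1]}\right] = \tfrac14\big((q^{[n]})_1^2 + (q^{[n]})_{11} - (q^{[n]})_2\big)^2$, which involves the field only through $q^{[n]}$ and its $t_1,t_2$-derivatives --- no lattice shifts remain. Hence $\var{12}{L'_{12}}{q^{[n]}} = 0$ is nothing but the ordinary two-dimensional Euler--Lagrange equation of the Lagrangian density $\mathcal F(v) = \tfrac14\big(v_1^2 + v_{11} - v_2\big)^2$ in the single unknown $v = q^{[n]}$, namely
\[ \var{12}{\mathcal F}{v} = -\D_1\!\big((v_1^2+v_{11}-v_2)\,v_1\big) + \tfrac12\,\D_2\big(v_1^2+v_{11}-v_2\big) + \tfrac12\,\D_1^2\big(v_1^2+v_{11}-v_2\big) = 0 . \]

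It then remains only to expand the total derivatives. Writing $E = v_1^2+v_{11}-v_2$ and computing $\D_1 E$, $\D_2 E$, $\D_1^2 E$ and $\D_1(E v_1)$, the contributions in $v_1 v_{111}$, $v_{11}^2$ and $v_{112}$ cancel, leaving $-\var{12}{\mathcal F}{v} = \tfrac12 v_{22} - v_{11} v_2 - 2 v_{12} v_1 - \tfrac12 v_{1111} + 3 v_1^2 v_{11}$. Since $n$ is arbitrary, translation invariance of the lattice lets us write $q$ in place of $q^{[n]}$, giving \eqref{pde-q22}. I anticipate no genuine obstacle: the only delicate step is the localisation argument of the first paragraph, and the remainder is the bookkeeping of the total-derivative expansion. (As a consistency check, on solutions $v_1^2+v_{11}-v_2 = -2\exp\!\big(q^{[n]}-q^{[n-1]}\big)$ by \eqref{a-PDEs}, so that there $L'_{12}$ equals $\exp\!\big(2(q^{[n]}-q^{[n-1]})\big)$, recovering the Toda-field-theory shape of this doubly continuous Lagrangian --- though this observation is not needed for the proof.)
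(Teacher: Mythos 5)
Your argument is correct and essentially the paper's own: the paper obtains \eqref{pde-q22} in one line as the shifted multi-time Euler--Lagrange equation $\t^{-1} \var{12}{L'_{12}}{\p q} = 0$, implied by the Toda hierarchy via Theorem \ref{thm-Lij}, which is precisely your localisation to the $(t_1,t_2)$-plane at a single lattice site followed by expanding the variational derivative of $\mathcal{F}(v) = \tfrac14\big(v_1^2+v_{11}-v_2\big)^2$. The only loose point is the label: since $L'_{12}$ contains no unshifted $q$, the equation you use is not literally \eqref{EL-1} but its shifted counterpart (equivalently, criticality on the single-site plane via Definition \ref{def-critical}, which is what your derivation actually invokes), so nothing essential is affected.
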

	\begin{proof}
		The PDE is obtained as the shifted multi-time Euler-Lagrange equation 
		\[ \t^{-1} \var{12}{L_{12}'}{\p q} = 0 .\]
		Hence, by virtue of Theorem \ref{thm-Lij}, it is implied by the Toda hierarchy.
	\end{proof}
	
	This result indicates that PDE \eqref{pde-q22} is integrable in its own right, since Equation \eqref{a-PDEs} provides an auto-Bäcklund transformation for it. Indeed, Equation \eqref{pde-q22} can be identified as an integrable Boussinesq-type equation. In particular, it is equivalent to Equation (66) of \cite{mikhailov2007classification} via $u = q_1$ and $v = q_2$, and to Equation (1.2) of \cite{kupershmidt1985mathematics} via $u = 2 q_1$ and $h = 2 q_2 -  2 q_1^2$. 
	
	While it is not entirely surprising that a higher-order PDE can be obtained by eliminating lattice shifts form Equations \eqref{Toda}, doing this by direct computation would be tedious. It is remarkable that from our semi-discrete Lagrangian 2-form it follows immediately. 
	This sheds a new light on the observation that integrable PDEs are connected to differential-difference equations \cite{levi1980backlund,levi1981nonlinear}. 
	It is yet another indication that Lagrangian multiform theory captures integrable hierarchies in a fundamental way.
	
	Using the same methods as above we obtain
	\begin{align*}
	& L_{03} = q_1 q_3 - \frac{1}{4} q_1^4 - a (q_1^2 + \p q_1^2 + q_1 \p q_1) - a \p a + \frac{1}{2} a^2, \\
	& L_{13} = - a \left( \p q_1^3 + 2 a \p q_1 + \p a \p{\p q}_1 + 2 \p q_1 \p q_{11} + q_1 \p q_{11} - \p q_3 + a q_1 - \p a q_1 \right),
	\end{align*}
	and
	\begin{align*}
	L_{23} &= - a \Big( \p q_2 \left( q_1^2 + \p q_1^2 + q_1 \p q_1 + \m a + a \right) + \p{\p q}_2 \p a + 2 \p q_1 \p q_{12} + q_1 \p q_{12} 
	- \p q_3 \left( q_1 + \p q_1 \right) - \p q_{13} \\
	&\qquad\qquad - q_1^2 \p q_1^2 - \m a \p q_1^2 + 2 a q_1 \p q_1 - \p a q_1^2 - a \p a - \m a \p a - \m a a - a^2 \Big) .
	\end{align*}
	Again we can use the multi-time Euler-Lagrange equations to obtain a PDE at a single lattice site:
	\begin{cor}
		The Toda hierarchy \eqref{Toda} implies the PDE
		\begin{equation}
		\label{pde-q3}
		q_3 = -2 q_1^3 + 3 q_1 q_2 + q_{111} .
		\end{equation}
	\end{cor}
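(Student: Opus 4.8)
The plan is to follow the template used in the previous corollary: to exhibit PDE~\eqref{pde-q3} as one of the multi-time Euler--Lagrange equations \eqref{EL-1}--\eqref{EL-D2} (possibly after a lattice shift by $\t^{-1}$) of the semi-discrete $2$-form with the coefficients $L_{0j}$, $L_{ij}$ computed above, and then to invoke Theorem~\ref{thm-Lij}, which guarantees that all of those equations hold on solutions of the Toda hierarchy. Since \eqref{pde-q3} is first order in $t_3$, the relevant coefficient must be one carrying a $q_3$ or $\p q_3$, that is, $L_{03}$, $L_{13}$ or $L_{23}$. By analogy with the way PDE~\eqref{pde-q22} was extracted from $\t^{-1}\var{12}{L'_{12}}{\p q}=0$, the first step is to scan the variational derivatives $\var{03}{L_{03}}{\,\cdot\,}$, $\var{13}{L_{13}}{\,\cdot\,}$, $\var{23}{L_{23}}{\,\cdot\,}$ and their $\t^{-1}$-shifts at a single lattice site for the one that produces the combination $q_3 + 2 q_1^3 - 3 q_1 q_2 - q_{111}$.

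The second step is to simplify the resulting expression on solutions of the Toda hierarchy. The raw variational derivative will still contain the Flaschka variable $a = \exp(\p q - q)$, its backward shift $\m a$, and the shifted first derivatives $\p q_1$, $\m q_1$. These are eliminated using the lattice equations: \eqref{Toda-2} gives $a + \m a = q_2 - q_1^2$, \eqref{Toda-1} gives $a - \m a = q_{11}$ (equivalently, the resolved system \eqref{a-PDEs}), and the shifted derivatives are reabsorbed through $a(\p q_1 - q_1) = \D_1 a$ and $\m a(q_1 - \m q_1) = \D_1 \m a$, so that $a\,\p q_1 + \m a\,\m q_1 = q_1(a + \m a) + \D_1(a - \m a) = q_1(q_2 - q_1^2) + q_{111}$. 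Substituting these identities into \eqref{Toda-3} (or directly into the variational derivative) collapses everything to the shift-free left-hand side of \eqref{pde-q3}, and Theorem~\ref{thm-Lij} then gives the claim. In fact these two steps already constitute a self-contained elementary proof, since \eqref{pde-q3} is just \eqref{Toda-3} rewritten after eliminating $a$, $\m a$ and all lattice shifts.

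The main obstacle is the first step: among the many multi-time Euler--Lagrange equations of Theorem~\ref{thm-EL} one has to identify which one --- and with which $\t^{-1}$-shift --- coincides with \eqref{pde-q3}, and then to carry out the shift bookkeeping of the second step carefully, in particular checking that the third-order term $q_{111}$ is produced correctly, which rests on recognising that $\D_1(a - \m a) = q_{111}$ by differentiating \eqref{Toda-1}. As in the previous corollary, a subtlety worth flagging is that the relevant variational derivative need only agree with the left-hand side of \eqref{pde-q3} modulo the equations of motion, so the substitution of the Toda equations in the second step is genuinely required and is not an off-shell identity.
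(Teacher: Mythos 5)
Your proposal is correct, but the part of it that actually constitutes a complete proof is a genuinely different, more elementary route than the paper's. The paper does exactly what your first step only sketches: it identifies the relevant multi-time Euler--Lagrange equation explicitly, namely $\var{13}{L_{13}}{\p q} = 0$, which (on solutions, via Theorem \ref{thm-Lij}) reads $q_1^3 - 3 q_1 q_{11} + 6 q_1 a + q_{111} - q_3 = 0$, and then eliminates the single remaining shifted quantity $a$ using \eqref{a-PDE} to land on \eqref{pde-q3}; your plan to ``scan'' the variational derivatives of $L_{03}$, $L_{13}$, $L_{23}$ is the same idea but leaves the identification open, so on its own it is not yet a proof. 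However, your second observation closes the gap independently: rewriting \eqref{Toda-3} as $q_3 = q_1^3 + 2q_1(a + \m a) + (a\,\p q_1 + \m a\,\m q_1)$, using the off-shell identities $\D_1 a = a(\p q_1 - q_1)$, $\D_1 \m a = \m a(q_1 - \m q_1)$ to get $a\,\p q_1 + \m a\,\m q_1 = q_1(a+\m a) + \D_1(a - \m a)$, and then substituting $a + \m a = q_2 - q_1^2$ from \eqref{Toda-2} and $\D_1(a - \m a) = q_{111}$ from \eqref{Toda-1}, does collapse \eqref{Toda-3} to \eqref{pde-q3} (I checked the algebra: $q_1^3 + 3q_1(q_2 - q_1^2) + q_{111} = -2q_1^3 + 3q_1q_2 + q_{111}$). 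What each route buys: your elementary derivation is self-contained and needs neither the explicit $L_{13}$ nor Theorem \ref{thm-Lij}, but it loses the conceptual point the paper is making --- that the shift-free PDE drops out automatically as a multi-time Euler--Lagrange equation of the semi-discrete 2-form --- whereas the paper's route is shorter given that $L_{13}$ has already been computed, and it is the route that supports the paper's claim that the multiform structure, rather than ad hoc elimination of lattice shifts, produces these single-site PDEs.
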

	\begin{proof}
		From the Euler-Lagrange equation $\displaystyle \var{13}{L_{13}}{\bar q} = 0$ we obtain
		\[ q_1^3 - 3 q_1 q_{11} + 6 q_1 a + q_{111} - q_3 = 0, \]
		which we can write using Equation \eqref{a-PDE} as
		\[ -2 q_1^3 + 3 q_1 q_2 + q_{111} - q_3 = 0 . \qedhere \]
	\end{proof}
	
	At this stage, it is unclear whether Equation \eqref{pde-q3} by itself is integrable. However, the system of Equations \eqref{pde-q22}--\eqref{pde-q3} is integrable in the sense of existence of an auto-Bäcklund transformation, given by Equation \eqref{a-PDEs}: if $q$ solves both PDEs , then so does $\bar{q} = q + \log\left( \frac{1}{2}(q_2 + q_{11} - q_1^2) \right)$, as can be verified by a long but elementary calculation. A detailed investigation of Equation \eqref{pde-q3}, as part of the hierarchy of higher equations which can presumably be obtained in an analogous way, is left for future work.
	
	\section{Semi-discrete potential KdV}
	\label{sec:sdpkdv}
	
	As another example of a system of interacting particles on a line, we consider the semi-discrete potential KdV equation. It belongs to the class of equations studied by direct linearisation in \cite{nijhoff1983direct} and appears as a semi-continuous limit of the lattice potential KdV equation \cite{wiersma1987lattice}. The semi-discrete potential KdV equation and the second member of its hierarchy can be written as
	\begin{align}
	&q_1 = \frac{\alpha}{\alpha + \p q - \m q} - \beta 
	\label{sdpKdV-1} , \\
	&q_2 = \frac{-\alpha^2}{(\alpha + \p q - \m q)^2} \left( \frac{1}{\alpha + \p{\p q} - q} + \frac{1}{\alpha + q - \m{\m q}} \right) ,
	\label{sdpKdV-2}
	\end{align}
	for constants $\alpha$ and $\beta$. In \cite{wiersma1987lattice} we find this hierarchy with $\beta = 1$, and with a different second equation which is a linear combination of our Equations \eqref{sdpKdV-1} and \eqref{sdpKdV-2}. Solutions of equations \eqref{sdpKdV-1} and \eqref{sdpKdV-2} give critical points of the actions associated to the Lagrangians
	\begin{align*}
	& L_{01} = q_1 \p q - \alpha \log(\alpha + \p q - \m q), \\
	& L_{02} = q_2 \p q - \frac{\alpha^2}{(\alpha + \p q - \m q)(\alpha + \p{\p q} - q)} .
	\end{align*}
	Notice that $L_{01}$ does not depend on $\beta$. Indeed, its Euler-Lagrange equation is
	\[ \p q_1 - \m q_1  = \frac{\alpha}{\alpha + \p{\p q} - q} - \frac{\alpha}{\alpha + q - \m{\m q}}, \]
	which is implied by Equation \eqref{sdpKdV-1} (regardless of the value of $\beta$) but not equivalent to it.
	
	To find a semi-discrete Lagrangian two-form we calculate $\D_1 L_{02} - \D_2 L_{01}$ and write it as a discrete derivative. To keep the length of our expressions under control we will write
	\[ v = \frac{1}{\alpha + \p q - \m q}. \]
	An elementary calculation shows that with 
	\[ L_{12} = \alpha^2 \p v^2 \left( v \p{\p q}_1 + \p{\p v} q_1 \right) + \alpha^2 \beta \p v^2 \left( \p{\p v} + v \right) - \alpha^3 v \p v^2 \p{\p v} + \alpha \p v q_2 + \beta \p q_2 \]
	there holds
	\begin{align*}
	\Delta L_{12} - \D_1 L_{02} + \D_2 L_{01} &= 
	\left( \p q_2 + \alpha^2 \p v^2 \left( \p{\p v} + v \right) \right) \left( q_1 - \alpha v + \beta \right) \\
	&\qquad - \left( q_2 + \alpha^2 v^2 \left( \p v + \m v \right) \right) \left( \p q_1 - \alpha \p v + \beta \right),
	\end{align*}
	so the exterior derivative of the semi-discrete 2-form $\cL = \left( L_{01} \,\d t_1 + L_{02} \,\d t_2 ,  L_{12} \,\d t_1 \wedge \d t_2 \right)$ attains a double zero on solutions to the semi-discrete potential KdV hierarchy. We can check that its Euler-Lagrange equations are equivalent to this hierarchy. For example, we have
	\[ \var{12}{L_{12}}{\p q_2} = -\var{01}{L_{01}}{\p q} \quad\Rightarrow\quad q_1 = \alpha v - \beta \]
	and
	\[ \var{12}{L_{12}}{\p q_1} = \var{02}{L_{02}}{\p q} \quad\Rightarrow\quad q_2 = -\alpha^2 v^2 \left( \p v + \m v \right) .\]
	Note that these equations are stronger than the Euler-Lagrange equations of $L_{01}$ and $L_{02}$ individually.
	
	\begin{remark}
		The semi-discrete potential KdV hierarchy is closely related to the Volterra hierarchy. Its leading equation is given by 
		\begin{equation}\label{volterra}
		a_1 = a (\p a - \m a)
		\end{equation} 
		and can be obtained from the semi-discrete potential KdV hierarchy by defining 
		\[ a = \alpha v \m v ,\]
		see \cite[Exercise 5.4]{hietarinta2016discrete}. The Volterra lattice \eqref{volterra} can be viewed as a generalisation of the Lotka-Volterra predator-prey model to a chain of $n$ species, each of which is preyed upon by the next. As an integrable system it first appeared in \cite{manakov1974complete} and \cite{kac1975explicitly}. It is part of a hierarchy that can be obtained by restricting the even-numbered flows of the Toda hierarchy to the manifold defined by $b = 0$. Alternatively, it can be obtained from the Toda lattice by a Miura transformation \cite[Chapter 4]{suris2003problem}. To our knowledge, no direct Lagrangian description of the Volterra hierarchy is known.
	\end{remark}

	\section{Conclusions}
	
	We have presented the semi-discrete theory of Lagrangian multiforms, with the Toda lattice as our leading example. While the main text considers only the case of a single discrete independent variable, the general theory is analogous and outlined in the appendix. 
	
	The ideas of this paper closely follow the multiform theory in the fully discrete and continuous cases. Nevertheless, it led to an unexpected result: the semi-discrete multiform formulation of the Toda hierarchy produces PDEs which hold at a single lattice site as a consequence of the differential-difference equations of the Toda hierarchy. This phenomenon showcases the utility of the Lagrangian multiform approach in this context, and is a strong motivation to develop the multiform formulation of other semi-discrete hierarchies.
	
	\section*{Acknowledgements}
	
	MV acknowledges support by the Deutsche Forschungsgemeinschaft (DFG), project number VE~1211/1-1.
	
	We would like to thank Vladimir Novikov for his help to identify Equation \eqref{pde-q22} as a Boussinesq-type equation, and Vincent Caudrelier and Frank Nijhoff for inspiring discussions on many topics related to this work.

	\section*{Appendix: general semi-discrete multi-time EL equations}
	
	We consider a semi-discrete space $\Z^M \times \R^{N-M}$ of independent coordinates $n_1,\ldots,n_M$, $t_{M+1},\ldots,t_N$,  and dependent variable $q$.  We define the shift operator $\t_i$ such that
	\begin{equation}
	\t_i q(n_1,\ldots,n_i,\ldots,n_M,t_{M+1},\ldots,t_N)=q(n_1,\ldots,n_i+1,\ldots,n_M,t_{M+1},\ldots,t_N).
	\end{equation}
	We will use the notation $\D_i$ for difference operator or the total derivative, depending on whether $i$ represents a discrete or continuous direction.
	For $1\leq i\leq M$ we define
	\begin{equation}
	\D_i=\t_i-\textsf{id} ,
	\end{equation}
	where $\textsf{id}$ represents the identity operator. This differs from the definition of the discrete derivative $\Delta=\textsf{id}-\t^{-1}$ given in Section \ref{sec:geometry}.  The reasons for this difference are discussed at the end of this appendix. For $M<i\leq N$ we define
	\begin{equation}
	\D_i=\frac{\partial}{\partial t_i} +\sum_Iq_{It_i}\frac{\partial}{\partial q_I} ,
	\end{equation}
	where $I$ is an $N$-component multi-index $(i_1,\ldots,i_N)$ representing shifts with respect to $n_1,\ldots,n_M$ and derivatives with respect to $t_{M+1},\ldots,t_N$.  We shall also use the notation $I \alpha$ to mean $(i_1,\ldots,i_\alpha+1,\ldots,i_N)$ and $I\setminus \alpha$ to mean $(i_1,\ldots,i_\alpha-1,\ldots,i_N)$.
	
	We introduce symbols $\textsf{d}n_i$, which are the discrete analogue of the $\textsf{d}t_i$. They are the same as the $\Delta_i$ from \cite{Mansfield2008}. In the exterior algebra spanned by the $\textsf{d}n_i$ and $\textsf{d}t_i$ we consider a $k$-form
	\begin{equation}\label{kformLk2d}
	\cL=\sum_{1\leq i_1<\ldots <i_k\leq N}L_{(i_1\ldots i_k)}\ \textsf{d}n_{i_1}\wedge\ldots\wedge \textsf{d}n_{i_j}\wedge\textsf{d}t_{i_{j+1}}\wedge\ldots\wedge\textsf{d}t_{i_k} ,
	\end{equation}
	where $j$ is the largest integer such that $i_j \leq M$. We assume that each $L_{(i_1\ldots i_k)}$ depends on $q$ and shifts of $q$ in the $n_1,\ldots,n_M$ coordinates (without loss of generality, we shall assume that there are no backward shifts), derivatives of $q$ in the $t_{M+1},\ldots,t_N$ coordinates and combinations thereof.  
	Even though $\cL$ is formally a $k$-form, only the $\d t_i$ have an interpretation as differentials. The $\d n_i$ are formal symbols, so the geometric interpretation of $\textsf{d}n_{i_1}\wedge\ldots\wedge \textsf{d}n_{i_j}\wedge\textsf{d}t_{i_{j+1}}\wedge\ldots\wedge\textsf{d}t_{i_k}$ is a $(k-j)$-form. Hence, geometrically, Equation \eqref{kformLk2d} is a differential form of mixed type (a special case of which was considered in Definition \ref{def-form}), but computationally it is treated in very close analogy to a proper differential $k$-form. Using the definitions of \cite{Mansfield2008} we find
	\begin{equation}
	\dd\cL=\sum_{1\leq i_1<\ldots<i_{k+1}\leq N}A^{i_1\ldots i_{k+1}} \,\textsf{d}n_{i_1}\wedge\ldots\wedge\textsf{d}n_{i_j}\wedge \textsf{d}t_{i_{j+1}}\wedge\ldots\wedge\textsf{d}t_{i_{k+1}} ,
	\end{equation}
	where the $A^{i_1\ldots i_{k+1}}$ are given by
	\begin{equation}\label{Adef}
	A^{i_1\ldots i_{k+1}}=\sum_{\alpha=1}^{k+1}(-1)^{(\alpha+1)}\D_{{i_\alpha}}L_{(i_1,\ldots,i_{\alpha-1},i_{\alpha+1},\ldots,i_{k+1})}.
	\end{equation}
	
	For a fixed $i_1,\ldots ,i_{k+1}$, we shall write $L_{(\bar \alpha)}$ to denote $L_{(i_1,\ldots,i_{\alpha-1},i_{\alpha+1},\ldots,i_{k+1})}$.  A multi-index denoted by $J$ is such that component $j_i=0$ whenever $i\neq i_1,\ldots, i_{k+1}$, i.e.\@ $J$ represents shifts with respect to $n_{i_1}, \ldots,n_{i_j},t_{i_{j+1}},\ldots,t_{i_{k+1}}$.  We define the variational derivative with respect to $q_I$ acting on $L_{(\bar \alpha)}$ and $A^{i_1\ldots i_{k+1}}$ respectively as
	\begin{equation}\label{vardivdef}
	\begin{split}
	&\frac{\delta L_{(\bar \alpha)}}{\delta q_{I}}=\sum_{\substack{  J\\j_{i_\alpha}=0}}(\t^{-1})_{J^n}(-\D)_{J^t}\frac{\partial L_{(\bar \alpha)}}{\partial q_{IJ}} ,\\
	&\frac{\delta A^{i_1\ldots i_{k+1}}}{\delta q_{I}}=\sum_J(\t^{-1})_{J^n}(-\D)_{J^t}\frac{\partial A^{i_1\ldots i_{k+1}}}{\partial q_{IJ}} ,
	\end{split}
	\end{equation}
	where $I$ is again an $N$ component multi-index $(i_1,\ldots,i_N)$ representing shifts with respect to $n_1,\ldots,n_M$ and derivatives with respect to $t_{M+1},\ldots,t_N$.  We use the notation $I^n$ to denote only the first $M$ components of $I$ that relate to shifts in the $n_i$ coordinates, and $I^t$ to denote the last $N-M$ components of $I$ that relate to derivatives with respect to the $t_i$.  Therefore,
	\begin{equation}
	q_I=\t_{I^n}\D_{I^t}q=\t_1^{i_1}\ldots\t_N^{i_M}\D_{M+1}^{i_{M+1}}\ldots\D_N^{i_N}q.
	\end{equation}
	We define that a variational derivative with respect to $q_I$ is zero in the case where any component of the multi-index $I$ is negative (we are only able to do this because we have assumed that there are no backward shifts in our Lagrangians).  We note that in contrast to the variational derivative operators defined in Section \ref{sec:theory}, for brevity of notation, we now omit an index on the operator.  For example, in this appendix we write $\var{}{L_{ij}}{q_{I}}$ instead of $\var{ij}{L_{ij}}{q_{I}}$.
	
	\begin{thm}[Multi-time Euler-Lagrange equations]\label{mfeqnsthm}
		The function $q$ is a critical point of the $k$-form $\cL$ as defined in \eqref{kformLk2d} if and only if for all $i_1,\ldots i_{k+1}$ such that $1\leq i_1<\ldots<i_{k+1}\leq N$, and for all $I$,
		\begin{equation}\label{MFELeqnsA}
		\frac{\delta}{\delta q_I}A^{i_1\ldots i_{k+1}}=0,
		\end{equation}
		or equivalently,
		\begin{equation}\label{MFELeqns}
		\sum_{\alpha=1}^{j}(-1)^{\alpha + 1}\t_\alpha\frac{\delta L_{(\bar \alpha)}}{\delta q_{I\setminus i_\alpha}}+\sum_{\alpha=j+1}^{k+1}(-1)^{\alpha +1}\frac{\delta L_{(\bar \alpha)}}{\delta q_{I\setminus i_\alpha}}=0,
		\end{equation}
		where $j$ is the largest integer such that $i_j \leq M$.
	\end{thm}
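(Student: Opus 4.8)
The plan is to generalise the proof of Theorem \ref{thm-EL}, treating the discrete and continuous directions in parallel. First I would carry out the geometric reduction exactly as in the proof of Proposition \ref{prop-mEL-P}: the semi-discrete Stokes theorem (Theorem \ref{thm:stokes}, whose proof extends to the mixed forms \eqref{kformLk2d} with the exterior derivative \eqref{Adef} of \cite{Mansfield2008}), together with the localisation of variations to arbitrarily small support and the fact that every semi-discrete $(k+1)$-surface is locally a boundary, shows that $q$ is critical if and only if $\delta\dd\cL=0$. Since the coefficients of $\dd\cL$ are the $A^{i_1\ldots i_{k+1}}$ and $\delta A^{i_1\ldots i_{k+1}}=\sum_I \der{A^{i_1\ldots i_{k+1}}}{q_I}\,\delta q_I$, this is precisely the statement that $\der{A^{i_1\ldots i_{k+1}}}{q_I}=0$ for all index tuples and all $I$.

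Next I would show that this is equivalent to \eqref{MFELeqnsA}. That the vanishing of all partial derivatives implies \eqref{MFELeqnsA} is immediate from \eqref{vardivdef}. For the converse I would use, separately in each direction $i_\alpha\in\{i_1,\ldots,i_{k+1}\}$, the elementary telescoping identities $\der{A}{q_I}=\var{i_\alpha}{A}{q_I}+\D_{i_\alpha}\var{i_\alpha}{A}{q_{Ii_\alpha}}$ for continuous $i_\alpha$ and $\der{A}{q_I}=\var{i_\alpha}{A}{q_I}-\t_{i_\alpha}^{-1}\var{i_\alpha}{A}{q_{Ii_\alpha}}$ for discrete $i_\alpha$, where $\delta_{i_\alpha}/\delta$ denotes the one-directional variational derivative. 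Since the $\D_i$ and $\t_i$ commute (by \eqref{partialT}, \eqref{partialD} the one-directional variational derivatives commute), composing these $k+1$ identities expresses every $\der{A^{i_1\ldots i_{k+1}}}{q_I}$ as a \emph{finite} combination of the $\frac{\delta A^{i_1\ldots i_{k+1}}}{\delta q_{I'}}$, so \eqref{MFELeqnsA} forces all partial derivatives to vanish. This is the analogue of the displayed formulas at the end of the proof of Proposition \ref{prop-mEL-P}.

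The computational heart is rewriting \eqref{MFELeqnsA} as \eqref{MFELeqns}. I would apply $\frac{\delta}{\delta q_I}$ termwise to $A^{i_1\ldots i_{k+1}}=\sum_\alpha(-1)^{\alpha+1}\D_{i_\alpha}L_{(\bar\alpha)}$. For fixed $\alpha$, the operator $\D_{i_\alpha}$ commutes (up to the reindexings \eqref{partialT}, \eqref{partialD}) with the shifts and total derivatives producing the variational derivative in the directions $\{i_1,\ldots,i_{k+1}\}\setminus\{i_\alpha\}$, which are exactly the directions of the variational derivative of the $k$-form coefficient $L_{(\bar\alpha)}$; so only the $i_\alpha$-directional part of $\frac{\delta}{\delta q_I}$ acting on $\D_{i_\alpha}L_{(\bar\alpha)}$ needs attention. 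If $i_\alpha$ is continuous ($\alpha>j$), the alternating total-derivative sum telescopes against $\D_{i_\alpha}$, leaving $\frac{\delta L_{(\bar\alpha)}}{\delta q_{I\setminus i_\alpha}}$, as in the derivation of \eqref{varder-Pijk}. If $i_\alpha$ is discrete ($\alpha\le j$), write $\D_{i_\alpha}L_{(\bar\alpha)}=\t_{i_\alpha}L_{(\bar\alpha)}-L_{(\bar\alpha)}$; the alternating backward-shift sum telescopes and, because the $L$'s contain no backward shifts, it terminates after a single surviving term, giving $\t_{i_\alpha}\frac{\delta L_{(\bar\alpha)}}{\delta q_{I\setminus i_\alpha}}$ — the analogue of how the $\Delta L_{ij}$ term is treated in \eqref{varder-P0ij}, with the forward shift $\t_{i_\alpha}$ appearing precisely because the appendix uses $\D_i=\t_i-\mathsf{id}$ rather than $\Delta=\mathsf{id}-\t^{-1}$. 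Reassembling with the signs $(-1)^{\alpha+1}$ gives the left-hand side of \eqref{MFELeqns}.

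I expect the main obstacle to be not conceptual but the bookkeeping in this last step: one must keep straight, simultaneously, the hat notation $(\bar\alpha)$, the split of $\{i_1,\ldots,i_{k+1}\}$ into its discrete part $i_1,\ldots,i_j$ and continuous part $i_{j+1},\ldots,i_{k+1}$, the reindexings of \eqref{partialT}/\eqref{partialD}, and the sign conventions (including the deliberate mismatch between $\Delta$ and $\D_i$ flagged in the appendix). The one genuinely substantive point inside the computation is the termination of the discrete telescoping, which relies on the standing assumption that the Lagrangian coefficients contain no backward shifts.
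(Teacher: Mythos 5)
Your proposal is correct and follows essentially the same route as the paper: criticality is reduced to $\delta\dd\cL=0$ exactly as in Proposition \ref{prop-mEL-P}, the equivalence with \eqref{MFELeqnsA} rests on expressing $\partial A^{i_1\ldots i_{k+1}}/\partial q_I$ as a finite combination of variational derivatives (the paper's Lemma \ref{discpartlem}, which you re-derive by composing one-directional telescoping identities rather than by the paper's binomial counting argument), and the passage to \eqref{MFELeqns} is the same per-direction telescoping/cancellation, with the forward shift $\t_{i_\alpha}$ surviving in the discrete directions precisely as you describe. The one substantive hypothesis you invoke --- that the $L_{(\bar\alpha)}$ contain no backward shifts --- is indeed the standing assumption the paper relies on at the same points.
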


	\noindent In order to prove that these are the multi-time EL equations, we will require the following lemma:
	
	\begin{lemma}\label{discpartlem}
		Let $1\leq i_1<\ldots<i_{k+1}\leq N$ be fixed.  For all multi-indices $I$,
		\begin{equation}\label{lem1}
		\frac{\partial A^{i_1\ldots i_{k+1}}}{\partial q_I}=\sum_{\substack{J\\j_i\leq 1}}(-\t^{-1})_{J^n}\D_{J^t}\frac{\delta A^{i_1\ldots i_{k+1}}}{\delta q_{IJ}} ,
		\end{equation}
		where the summation is over all multi-indices $J$ such that  $j_i=0$ whenever $i\neq i_1,\ldots, i_{k+1}$ and the non-zero $j_i$ are equal to 1.
	\end{lemma}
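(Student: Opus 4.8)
The plan is to substitute the definition of the variational derivative from \eqref{vardivdef} into the right-hand side of \eqref{lem1} and to show that the resulting double sum collapses, by a binomial-type cancellation in each coordinate direction, to the single term $\partial A^{i_1\ldots i_{k+1}}/\partial q_I$. Write $A = A^{i_1\ldots i_{k+1}}$ and let $S = \{i_1,\ldots,i_{k+1}\}$, with $i_1,\ldots,i_j$ its discrete members and $i_{j+1},\ldots,i_{k+1}$ its continuous ones. Since each $L_{(\bar\alpha)}$, and hence $A$, depends on only finitely many shifts and derivatives of $q$, every sum below is finite and may be rearranged freely.

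Expanding $\delta A/\delta q_{IJ}$ by \eqref{vardivdef} — renaming its summation variable $K$, which ranges over all multi-indices supported on $S$ — the right-hand side of \eqref{lem1} becomes
\[
\sum_{\substack{J\\ j_i \le 1}} \sum_{K} (-\t^{-1})_{J^n}\,\D_{J^t}\,(\t^{-1})_{K^n}\,(-\D)_{K^t}\,\frac{\partial A}{\partial q_{I+J+K}} .
\]
The backward shifts $\t^{-1}$ in the discrete directions of $S$ and the total derivatives $\D$ in the continuous directions of $S$ commute pairwise, so each summand equals $(-1)^{|J^n|+|K^t|}\,(\t^{-1})_{J^n+K^n}\,\D_{J^t+K^t}\,\partial A/\partial q_{I+J+K}$, where $|J^n|$ is the sum of the discrete entries of $J$ and $|K^t|$ the sum of the continuous entries of $K$. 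I would then reindex by the total multi-index $P = J+K$: for fixed $P$ the operator $(\t^{-1})_{P^n}\D_{P^t}$ and the factor $\partial A/\partial q_{I+P}$ depend only on $P$, so the right-hand side of \eqref{lem1} equals $\sum_P c(P)\,(\t^{-1})_{P^n}\D_{P^t}\,\partial A/\partial q_{I+P}$ with $c(P) = \sum (-1)^{|J^n|+|K^t|}$, the sum being over decompositions $P = J+K$ with the entries of $J$ in $\{0,1\}$. This count factors over the indices in $S$; a discrete index $i$ contributes $\sum_{j\in\{0,1\},\, P_i - j \ge 0}(-1)^{j}$, which is $1$ when $P_i = 0$ and $1+(-1) = 0$ when $P_i \ge 1$, while a continuous index $i$ contributes $\sum_{j\in\{0,1\},\, P_i - j \ge 0}(-1)^{P_i - j}$, which is $1$ when $P_i = 0$ and $(-1)^{P_i}+(-1)^{P_i-1} = 0$ when $P_i \ge 1$. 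Hence $c(P) = 1$ for $P = 0$ and $c(P) = 0$ otherwise, so the sum collapses to its $P = 0$ term, which is $\partial A/\partial q_I$ — the left-hand side of \eqref{lem1}.

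The computation is essentially bookkeeping; I expect the only delicate points to be the following. The minus signs enter asymmetrically: in a discrete direction the sign $(-1)^{|J^n|}$ comes from the outer operator $(-\t^{-1})_{J^n}$ of \eqref{lem1}, whereas in a continuous direction the sign $(-1)^{|K^t|}$ comes from the inner operator $(-\D)_{K^t}$ of \eqref{vardivdef}, so although both per-index cancellations vanish for $P_i \ge 1$ they are of slightly different form and must be verified separately. One must also keep the convention that $\partial A/\partial q_{I+P}$ (and the variational derivatives before reindexing) is zero whenever $I+P$ has a negative entry; this makes the reindexing legitimate even for multi-indices $I$ with negative components, since then the only surviving term, $P = 0$, still has $\partial A/\partial q_I = 0$. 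As a consistency check, taking $S$ to be one discrete index plus two continuous ones recovers the formula for $\partial P_{0ij}/\partial q_I^{[n]}$, and $S$ three continuous indices the formula for $\partial P_{ijk}/\partial q_I^{[n]}$, used in the proof of Proposition~\ref{prop-mEL-P}.
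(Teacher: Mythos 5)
Your argument is correct and is essentially the paper's own proof: both expand the variational derivative via \eqref{vardivdef} and show that, for each nonzero composite multi-index, the decompositions into an outer $J$ with entries in $\{0,1\}$ and an inner $K$ cancel. Your per-index factorisation of the coefficient $c(P)$ is just a repackaging of the paper's binomial count $\sum_{i}(-1)^i\binom{r}{i}=0$, with the minor merit of making explicit that the sign flip works the same way for discrete and continuous directions.
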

	
	\begin{proof}
		We first notice that the partial derivative on the left hand side of \eqref{lem1} appears only once in the sum on the right hand side.  We now need to show that all other terms that appear on the right hand side of \eqref{lem1}, sum to zero.  We note that all terms on the right hand side of \eqref{lem1} are of the form 
		\begin{equation}
		\label{Kterm}
		(\t^{-1})_{K^n}\D_{K^t}\frac{\partial A^{i_1\ldots i_{k+1}}}{\partial q_{IK}}
		\end{equation}
		for some multi-index $K = (k_1,\ldots,k_N)$ which satisfies $k_i=0$ whenever $i\neq i_1,\ldots, i_{k+1}$.
		Let $r$ be the number of non-zero entries in $K$.  We notice that the term \eqref{Kterm} appears exactly once when $|J|=0$, exactly ${r\choose 1}$ times with a factor of $-1$ when $|J|=1$, exactly ${r\choose 2}$ times when $|J|=2$ etc... In total, this term appears with a factor of $\sum_{i=0}^{r}(-1)^i{r\choose i}$.  It can easily be seen that this sum is zero by considering the binomial expansion of $(1-1)^r$.
	\end{proof}
	
	\begin{proof}[Proof of Theorem \ref{mfeqnsthm}]
		The first part of the proof of Proposition \ref{prop-mEL-P} of Section \ref{sec:theory}, showing that criticality is equivalent to $\delta\dd \cL=0$, immediately extends to the present case. We note that the equations given by $\delta\dd \cL=0$ are equivalent to
		\begin{equation}
		\frac{\partial}{\partial q_I}A^{i_1\ldots i_{k+1}}=0
		\end{equation}
		for all $I$ and all $1\leq i_1<\ldots< i_{k+1}\leq N$. Using Lemma \ref{discpartlem}, we see that this is the case if and only if Equation \eqref{MFELeqnsA} holds for all $I$ and all $1\leq i_1<\ldots< i_{k+1}\leq N$. It remains to show that \eqref{MFELeqns} and \eqref{MFELeqnsA} are equivalent expressions.
		
		The identities
		\begin{equation}
		\frac{\partial}{\partial q_I}\t_{i}=\t_{i}\frac{\partial}{\partial q_{I\setminus i}}
		\end{equation}
		for $1\leq i\leq M$ and
		\begin{equation}
		\frac{\partial}{\partial q_I}\D_{i}=\frac{\partial}{\partial q_{I\setminus i}}+\D_{i}\frac{\partial}{\partial q_I}
		\end{equation}
		for $M<i\leq N$ tell us that
		\begin{equation}
		\frac{\partial}{\partial q_I}A^{i_1\ldots i_{k+1}}=\sum_{\alpha=1}^{j}(-1)^{\alpha+1}\bigg(  \t_{{i_\alpha}}\frac{\partial L_{(\bar\alpha)}}{\partial q_{I\setminus i_\alpha}}     -\frac{\partial L_{(\bar\alpha)}}{\partial q_{I}}\bigg)+\sum_{\alpha=j+1}^{k+1}(-1)^{\alpha+1}\bigg(\frac{\partial L_{(\bar\alpha)}}{\partial q_{I\setminus i_\alpha}}+\D_{{i_\alpha}}\frac{\partial L_{(\bar\alpha)}}{\partial q_{I}}\bigg)
		\end{equation}
		so
		\begin{equation}\label{1stdeltaPd}
		\begin{split}
		\frac{\delta}{\delta q_I}A^{i_1\ldots i_{k+1}}&=\sum_J(\t^{-1})_{J^n}(-\D)_{J^t}\frac{\partial}{\partial q_{IJ}}A^{i_1\ldots i_{k+1}}\\
		&=\sum_J(T^{-1})_{J^n}(-\D)_{J^t}\sum_{\alpha=1}^{j}(-1)^{\alpha+1}\bigg(\t_{{i_\alpha}}\frac{\partial L_{(\bar\alpha)}}{\partial q_{IJ\setminus i_\alpha}}     -\frac{\partial L_{(\bar\alpha)}}{\partial q_{IJ}}\bigg)\\
		&\qquad +\sum_J(\t^{-1})_{J^n}(-\D)_{J^t}\sum_{\alpha=j+1}^{k+1}(-1)^{\alpha+1}\bigg(\frac{\partial L_{(\bar\alpha)}}{\partial q_{IJ\setminus i_\alpha}}+\D_{{i_\alpha}}\frac{\partial L_{(\bar\alpha)}}{\partial q_{IJ}}\bigg).
		\end{split}
		\end{equation}
		For $1\leq \alpha\leq j$, whenever $j_{i_\alpha}\neq 0$ in this sum, so $J$ is of the form $Ki_\alpha$ for some multi-index $K$, then
		\begin{equation}
		\pm(\t^{-1})_{J^n}(-\D)_{J^t}\t_{{i_\alpha}}\frac{\partial L_{(\bar\alpha)}}{\partial q_{IJ\setminus i_\alpha}}=\pm(\t^{-1})_{K^n}(-\D)_{K^t}\frac{\partial L_{(\bar\alpha)}}{\partial q_{IK}}
		\end{equation}
		will appear in this sum.  When $J=K$, the term
		\begin{equation}
		\mp(\t^{-1})_{K^n}(-\D)_{K^t}\frac{\partial L_{(\bar\alpha)}}{\partial q_{IK}}
		\end{equation}
		will appear, so these two terms cancel.  Similarly, for $j+1\leq \alpha\leq k$, whenever $j_{i_\alpha}\neq 0$ in this sum, so $J$ is of the form $Ki_\alpha$ for some multi-index $K$, then
		\begin{equation}
		\pm(T^{-1})_{J^n}(-\D)_{J^t}\frac{\partial L_{(\bar\alpha)}}{\partial q_{IJ\setminus i_\alpha}}=\mp(T^{-1})_{K^n}\D_{{i_\alpha}}(-\D)_{K^t}\frac{\partial L_{(\bar\alpha)}}{\partial q_{IK}}
		\end{equation}
		will appear in this sum.  When $J=K$, the term
		
		\begin{equation}
		\pm(T^{-1})_{K^n}(-\D)_{K^t}\D_{{i_\alpha}}\frac{\partial L_{(\bar\alpha)}}{\partial q_{IK}}
		\end{equation}
		will appear, so these two terms cancel.  As a result, \eqref{1stdeltaPd} simplifies to
		\begin{equation}
		\begin{split}
		\frac{\delta}{\delta q_I}A^{i_1\ldots i_{k+1}}
		&=\sum_{\alpha=1}^{j}\,\sum_{\substack{ J\\j_{i_\alpha}=0}}(-1)^{\alpha+1}(\t^{-1})_{J^n}(-\D)_{J^t}\t_{i_\alpha}\frac{\partial L_{(\bar\alpha)}}{\partial q_{IJ\setminus i_\alpha}}\\
		&\qquad +\sum_{\alpha=j+1}^{k+1}\,\sum_{\substack{ J\\j_{i_\alpha}=0}}(-1)^{\alpha+1}(\t^{-1})_{J^n}(-\D)_{J^t}\frac{\partial L_{(\bar\alpha)}}{\partial q_{IJ\setminus i_\alpha}}\\
		&=\sum_{\alpha=1}^{j}(-1)^{\alpha+1}\t_{i_\alpha}\frac{\delta L_{(\bar \alpha)}}{\delta q_{I\setminus i_{\alpha}}}+\sum_{\alpha=j+1}^{k+1}(-1)^{\alpha+1}\frac{\delta L_{(\bar \alpha)}}{\delta q_{I\setminus i_{\alpha}}}.
		\end{split}
		\end{equation}
		This shows that $\delta\dd \cL=0$ is equivalent to
		\begin{equation*}
		\sum_{\alpha=1}^{j} (-1)^{\alpha+1}\t_{i_\alpha}\frac{\delta L_{(\bar \alpha)}}{\delta q_{I\setminus i_{\alpha}}} + \sum_{\alpha=j+1}^{k+1}(-1)^{\alpha+1}\frac{\delta L_{(\bar \alpha)}}{\delta q_{I\setminus i_{\alpha}}} = 0 . \qedhere
		\end{equation*}
	\end{proof}
	
	In this appendix, we defined the discrete derivarive $\D_i=\t_i-\textsf{id}$.  Alternatively, we could have defined $\D_i=\textsf{id}-\t_i^{-1}$ (as we did in Section \ref{sec:geometry}) which would have led to the equivalent multi-time EL equations
	\begin{equation}\label{altMFeqns}
	\sum_{\alpha=1}^{j}(-1)^{\alpha + 1}\frac{\delta \tilde L_{(\bar \alpha)}}{\delta q_{I}}+\sum_{\alpha=j+1}^{k+1}(-1)^{\alpha +1}\frac{\delta \tilde L_{(\bar \alpha)}}{\delta q_{I\setminus i_\alpha}}=0.
	\end{equation}
	We use $\tilde L$ to denote the Lagrangians because they are not the same as the ones in \eqref{MFELeqns}. They are related by
	\begin{equation}
	\tilde L_{\bar \alpha}=\prod_{\substack{\beta=1\\ \beta \neq \alpha}}^j \t_{i_\beta}^{-1} L_{\bar \alpha},
	\end{equation}
	i.e.\@ by a shift in all discrete directions except the direction labelled by $i_\alpha$ (if it is discrete).
	The equivalence of \eqref{MFELeqns} and \eqref{altMFeqns} can then be seen by applying $\prod_{\beta=1}^j \t_{i_\beta}^{-1}$ to \eqref{MFELeqns} and re-labeling the multi-index $I$ to obtain \eqref{altMFeqns}.
	
	We choose to present the general semi-discrete multi-time EL equations in the form given in \eqref{MFELeqns} in order to highlight the close connection between the semi-discrete and continuous multi-time EL equations, with $I \setminus i_\alpha$ appearing in both.  Also, when the equations are presented in this way, it is clear that they include the usual EL equations for each $L_{\bar \alpha}$ (obtained by setting $I=i_\alpha$). On the other hand, the multi-time EL equations that we presented in Section \ref{sec:theory} are in the form given in \eqref{altMFeqns} in order to avoid the presence of shift operators in the multi-time EL equations.
	
	\bibliographystyle{abbrvnat_mv}
	\bibliography{semidiscrete}

\begin{thebibliography}{28}
\providecommand{\natexlab}[1]{#1}
\providecommand{\url}[1]{\texttt{#1}}
\expandafter\ifx\csname urlstyle\endcsname\relax
  \providecommand{\doi}[1]{doi: #1}\else
  \providecommand{\doi}{doi: \begingroup \urlstyle{rm}\Url}\fi

\bibitem[Adler and Shabat(1997)]{adler1997class}
Adler~V.~E. \& Shabat~A.~B.
\newblock \textit{On a class of {Toda} chains}.
\newblock \href{http://dx.doi.org/10.1007/BF02634053}{Theoretical and
  Mathematical Physics, 111\,:\,\mbox{647--657}}.
\newblock ISSN 0040-5779, 1573-9333, 1997.

\bibitem[Anderson(1992)]{anderson1992introduction}
Anderson~I.~M.
\newblock \textit{Introduction to the variational bicomplex}.
\newblock In Gotay~M., Marsden~J. \& Moncrief~V., editors, \emph{Mathematical
  Aspects of Classical Field Theory}, pages \mbox{51--73}. AMS, 1992.

\bibitem[Boll et~al.(2014)Boll, Petrera, and Suris]{boll2014integrability}
Boll~R., Petrera~M. \& Suris~Y.~B.
\newblock \textit{What is integrability of discrete variational systems?}
\newblock \href{http://dx.doi.org/10.1098/rspa.2013.0550}{Proceedings of the
  Royal Society A: Mathematical, Physical and Engineering Sciences,
  470\,:\,\mbox{20130550}}, 2014.

\bibitem[Flaschka(1974)]{flaschka1974toda}
Flaschka~H.
\newblock \textit{The {Toda} lattice. {II}. existence of integrals}.
\newblock \href{http://dx.doi.org/10.1103/PhysRevB.9.1924}{Physical Review B,
  9\,:\,\mbox{1924}}, 1974.

\bibitem[Hietarinta et~al.(2016)Hietarinta, Joshi, and
  Nijhoff]{hietarinta2016discrete}
Hietarinta~J., Joshi~N. \& Nijhoff~F.~W.
\newblock \textit{Discrete Systems and Integrability}.
\newblock Volume~54. Cambridge university press, 2016.

\bibitem[Kac and van Moerbeke(1975)]{kac1975explicitly}
Kac~M. \& van Moerbeke~P.
\newblock \textit{On an explicitly soluble system of nonlinear differential
  equations related to certain {Toda} lattices}.
\newblock \href{http://dx.doi.org/10.1016/0001-8708(75)90148-6}{Advances in
  Mathematics, 16\,:\,\mbox{160--169}}, 1975.

\bibitem[Kupershmidt(1985)]{kupershmidt1985mathematics}
Kupershmidt~B.
\newblock \textit{Mathematics of dispersive water waves}.
\newblock \href{http://dx.doi.org/10.1007/BF01466593}{Communications in
  Mathematical Physics, 99\,:\,\mbox{51--73}}, 1985.

\bibitem[Levi(1981)]{levi1981nonlinear}
Levi~D.
\newblock \textit{Nonlinear differential difference equations as {B\"acklund}
  transformations}.
\newblock \href{http://dx.doi.org/10.1088/0305-4470/14/5/028}{Journal of
  Physics A: Mathematical and General, 14\,:\,\mbox{1083}}, 1981.

\bibitem[Levi and Benguria(1980)]{levi1980backlund}
Levi~D. \& Benguria~R.
\newblock \textit{B{\"a}cklund transformations and nonlinear differential
  difference equations}.
\newblock \href{http://dx.doi.org/10.1073/pnas.77.9.5025}{Proceedings of the
  National Academy of Sciences, 77\,:\,\mbox{5025--5027}}, 1980.

\bibitem[Lobb and Nijhoff(2009)]{lobb2009lagrangian}
Lobb~S. \& Nijhoff~F.
\newblock \textit{Lagrangian multiforms and multidimensional consistency}.
\newblock \href{http://dx.doi.org/10.1088/1751-8113/42/45/454013}{Journal of
  Physics A: Mathematical and Theoretical, 42\,:\,\mbox{454013}}, 2009.

\bibitem[Lobb et~al.(2009)Lobb, Nijhoff, and Quispel]{lobb2009kp}
Lobb~S., Nijhoff~F. \& Quispel~R.
\newblock \textit{Lagrangian multiform structure for the lattice {KP} system}.
\newblock \href{http://dx.doi.org/10.1088/1751-8113/42/47/472002}{Journal of
  Physics A:Mathematical and Theoretical, 43\,:\,\mbox{472002}}, 2009.

\bibitem[Manakov(1974)]{manakov1974complete}
Manakov~S.
\newblock \textit{Complete integrability and stochastization of discrete
  dynamical systems}.
\newblock Zh. Exp. Teor. Fiz, 67\,:\,\mbox{543--555}, 1974.

\bibitem[Mansfield and Hydon(2008)]{Mansfield2008}
Mansfield~E. \& Hydon~P.
\newblock \textit{Difference forms}.
\newblock \href{http://dx.doi.org/10.1007/s10208-007-9015-8}{Foundations of
  Computational Mathematics, 8\,:\,\mbox{427--467}}, 2008.

\bibitem[Mikhailov et~al.(2007)Mikhailov, Novikov, and
  Wang]{mikhailov2007classification}
Mikhailov~A.~V., Novikov~V.~S. \& Wang~J.~P.
\newblock \textit{On classification of integrable nonevolutionary equations}.
\newblock \href{http://dx.doi.org/10.1111/j.1467-9590.2007.00376.x}{Studies in
  Applied Mathematics, 118\,:\,\mbox{419--457}}, 2007.

\bibitem[Nijhoff et~al.(1983)Nijhoff, Quispel, and Capel]{nijhoff1983direct}
Nijhoff~F., Quispel~G. \& Capel~H.
\newblock \textit{Direct linearization of nonlinear difference-difference
  equations}.
\newblock \href{http://dx.doi.org/10.1016/0375-9601(83)90192-5}{Physics Letters
  A, 97\,:\,\mbox{125--128}}, 1983.

\bibitem[Petrera and Suris(2017)]{petrera2017variational}
Petrera~M. \& Suris~{\relax Yu}.~B.
\newblock \textit{Variational symmetries and pluri-{L}agrangian systems in
  classical mechanics}.
\newblock \href{http://dx.doi.org/10.1080/14029251.2017.1418058}{Journal of
  Nonlinear Mathematical Physics, 24 (Sup.~1)\,:\,\mbox{121--145}}, 2017.

\bibitem[Petrera and Vermeeren(2021)]{petrera2021variational}
Petrera~M. \& Vermeeren~M.
\newblock \textit{Variational symmetries and pluri-{L}agrangian structures for
  integrable hierarchies of {PDE}s}.
\newblock \href{http://dx.doi.org/10.1007/s40879-020-00436-7}{European Journal
  of Mathematics, 7\,:\,\mbox{741--765}}, 2021.

\bibitem[Sleigh et~al.(2019)Sleigh, Nijhoff, and
  Caudrelier]{sleigh2019variational}
Sleigh~D., Nijhoff~F. \& Caudrelier~V.
\newblock \textit{A variational approach to {Lax} representations}.
\newblock \href{http://dx.doi.org/10.1016/j.geomphys.2019.03.015}{Journal of
  Geometry and Physics, 142\,:\,\mbox{66--79}}, 2019.

\bibitem[Sleigh et~al.(2020)Sleigh, Nijhoff, and
  Caudrelier]{sleigh2020variational}
Sleigh~D., Nijhoff~F. \& Caudrelier~V.
\newblock \textit{Variational symmetries and {L}agrangian multiforms}.
\newblock \href{http://dx.doi.org/10.1007/s11005-019-01240-5}{Letters in
  Mathematical Physics, 110\,:\,\mbox{805--826}}, 2020.

\bibitem[Suris(2003)]{suris2003problem}
Suris~{\relax Yu}.~B.
\newblock \textit{The Problem of Integrable Discretization: Hamiltonian
  Approach}.
\newblock Birkh{\"a}user, 2003.

\bibitem[Suris(2013)]{suris2013variational}
Suris~{\relax Yu}.~B.
\newblock \textit{Variational formulation of commuting {H}amiltonian flows:
  Multi-time {L}agrangian 1-forms}.
\newblock \href{http://dx.doi.org/10.3934/jgm.2013.5.365}{Journal of Geometric
  Mechanics, 5\,:\,\mbox{365--379}}, 2013.

\bibitem[Suris and Vermeeren(2016)]{suris2016lagrangian}
Suris~{\relax Yu}.~B. \& Vermeeren~M.
\newblock \textit{On the {L}agrangian structure of integrable hierarchies}.
\newblock In
  \href{http://dx.doi.org/10.1007/978-3-662-50447-5_11}{Bobenko~A.~I., editor,
  \emph{Advances in Discrete Differential Geometry}, pages \mbox{347--378}}.
  Springer, 2016.

\bibitem[Toda(1967)]{toda1967vibration}
Toda~M.
\newblock \textit{Vibration of a chain with nonlinear interaction}.
\newblock \href{http://dx.doi.org/10.1143/JPSJ.22.431}{Journal of the Physical
  Society of Japan, 22\,:\,\mbox{431--436}}, 1967.

\bibitem[Vermeeren(2019)]{vermeeren2019continuum}
Vermeeren~M.
\newblock \textit{Continuum limits of pluri-{L}agrangian systems}.
\newblock \href{http://dx.doi.org/10.1093/integr/xyy020}{Journal of Integrable
  Systems, 4\,:\,\mbox{xyy020}}, 2019.

\bibitem[Vermeeren(2021)]{vermeeren2021hamiltonian}
Vermeeren~M.
\newblock \textit{Hamiltonian structures for integrable hierarchies of
  {Lagrangian} {PDEs}}.
\newblock \href{http://dx.doi.org/10.46298/ocnmp.7491}{Open Communications in
  Nonlinear Mathematical Physics, 1}, 2021.

\bibitem[Wiersma and Capel(1987)]{wiersma1987lattice}
Wiersma~G.~L. \& Capel~H.
\newblock \textit{Lattice equations, hierarchies and {Hamiltonian} structures}.
\newblock \href{http://dx.doi.org/10.1016/0378-4371(87)90024-0}{Physica A:
  Statistical Mechanics and its Applications, 142\,:\,\mbox{199--244}}, 1987.

\bibitem[Xenitidis et~al.(2011)Xenitidis, Nijhoff, and
  Lobb]{xenitidis2011Lagrangian-structure}
Xenitidis~P., Nijhoff~F. \& Lobb~S.
\newblock \textit{On the {Lagrangian} formulation of multidimensionally
  consistent systems}.
\newblock \href{http://dx.doi.org/10.1098/rspa.2011.0124}{Proceedings of the
  Royal Society A: Mathematical, Physical and Engineering Sciences,
  467\,:\,\mbox{3295--3317}}, 2011.

\bibitem[Yoo-Kong et~al.(2011)Yoo-Kong, Lobb, and Nijhoff]{yoo2011discrete}
Yoo-Kong~S., Lobb~S. \& Nijhoff~F.
\newblock \textit{Discrete-time {Calogero--Moser} system and {Lagrangian}
  1-form structure}.
\newblock \href{http://dx.doi.org/10.1088/1751-8113/44/36/365203}{Journal of
  Physics A: Mathematical and Theoretical, 44\,:\,\mbox{365203}}, 2011.

\end{thebibliography}
	
\end{document}